\documentclass[journal]{IEEEtran}
\usepackage{cite}
\usepackage{amsmath,amssymb,amsfonts}
\usepackage{algorithmic}
\usepackage[linesnumbered,ruled,vlined]{algorithm2e}
\usepackage{graphicx}
\usepackage{textcomp}
\usepackage{xcolor}
\usepackage{enumerate}
\usepackage{booktabs}
\usepackage{diagbox}
\usepackage{supertabular}
\usepackage{caption}
\usepackage{float} 
\usepackage{subcaption}
\usepackage{verbatim}
\usepackage{bm}
\usepackage{breqn}
\usepackage{stfloats}
\usepackage{amsthm}
\usepackage{hyperref}

\newtheorem{theorem}{Theorem}

\newtheorem{problem}{Problem}

\newtheorem{proposition}{Proposition}

\begin{document}

\title{Optimal AoI-based Block Propagation and Incentive Mechanism for Blockchain Networks in Web 3.0}
	
\author{Jinbo Wen, Jiawen Kang*,  Zehui Xiong, Hongyang Du, Zhaohui Yang,  Dusit Niyato, \textit{Fellow, IEEE}, \\  Meng Shen, Yutao Jiao, and Yang Zhang

\thanks{ 
	    J. Wen and Y. Zhang are with the College of Computer Science and Technology, Nanjing University of Aeronautics and Astronautics, China (e-mails: jinbo1608@nuaa.edu.cn; yangzhang@nuaa.edu.cn). 
        J. Kang is with the School of Automation, Guangdong University of Technology, China (e-mail: kavinkang@gdut.edu.cn). 
        Z. Xiong is with the Pillar of Information Systems Technology and Design, Singapore University of Technology and Design, Singapore (e-mail: zehui\_xiong@sutd.edu.sg). 
        H. Du is with the Department of Electrical and Electronic Engineering, University of Hong Kong, Hong Kong (e-mail: duhy@eee.hku.hk).
        Z. Yang is with the College of Information Science and Electronic Engineering, Zhejiang University, China (e-mail: yang\_zhaohui@zju.edu.cn).
        D. Niyato is with the School of Computer Science and Engineering, Nanyang Technological University, Singapore (e-mail: dniyato@ntu.edu.sg).
        M. Shen is with the School of Cyberspace Science and Technology, Beijing Institute of Technology, China (e-mail: shenmeng@bit.edu.cn). 
        Y. Jiao is with the College of Communications Engineering, Army Engineering University of PLA, China (e-mail: yjiao001@yeah.net).  
        The work was presented in part at the 21st IEEE International Conference on Trust, Security and Privacy in Computing and Communications (\textit{*Corresponding author: Jiawen Kang}).
	} 
}
%   
%\markboth{..}%
%{Shell \MakeLowercase{\textit{et al.}}: Bare Demo of IEEEtran.cls for IEEE Journals}
	
\maketitle
	
\begin{abstract}
Web 3.0 is regarded as a revolutionary paradigm that enables users to securely manage data without a centralized authority. Blockchains, which enable data to be managed in a decentralized and transparent manner, are key technologies for achieving Web 3.0 goals. However, Web 3.0 based on blockchains is still in its infancy, such as ensuring block freshness and optimizing block propagation for improving blockchain performance. In this paper, we develop a freshness-aware block propagation optimization framework for Web 3.0. We first propose a novel metric called Age of Block Information (AoBI) based on the concept of age of information to quantify block freshness. AoBI measures the time elapsed from the freshest transaction generation to the completion of block consensus. To make block propagation optimization tractable, we classify miners into five different states and propose a block propagation model for public blockchains inspired by epidemic models. Moreover, considering that the miners are bounded rational, we propose an incentive mechanism based on the evolutionary game for block propagation to improve block propagation efficiency. Numerical results demonstrate that compared with other block propagation mechanisms in public blockchains, the proposed scheme has a higher block forwarding probability, which improves block propagation efficiency and decreases the minimum value of average AoBI.
\end{abstract}

\begin{IEEEkeywords}
Web 3.0, wireless networks, block propagation, age of information, evolutionary game.
\end{IEEEkeywords}
\IEEEpeerreviewmaketitle

\section{Introduction}

With the advancement of cutting-edge technologies such as blockchain technologies, Web 3.0 has gained significant attention because of its unique decentralized characteristics\cite{chen2022digital, 10542397}. {Since the creation of the World Wide Web, there have been three generations of the Web. Web 1.0 was created to build information networks, which are characterized by providing users with static information and reading through centralized architectures\cite{chen2022digital}. Web 2.0 is a paradigm shift in how the internet is used, which is characterized by interactivity and social connectivity\cite{chen2022digital}.} Nowadays, Web 3.0 is emerging as the next potential generation of information infrastructures, which is described as a decentralized Internet\cite{lin2023unified}. Different from the focus of Web 2.0 on establishing user interaction with content on the Internet, Web 3.0 focuses on users' control of their own generated content based on blockchain technologies and decentralized wireless edge computing architectures\cite{lin2023unified}.
%In Web 3.0, users have a multitude of NFT transactions that require secure storage and management by using blockchain technologies. 

Blockchains as distributed ledger technologies have attracted widespread attention from both academia and industry\cite{10528325}. Based on encryption technologies and consensus algorithms of distributed systems, blockchains can effectively solve security vulnerabilities caused by centralized nodes and the problem of the single point of failure\cite{wenoptimal, 10528325}. {Since blockchains can achieve cross-domain trust in the highly distributed system without a trusted center, they play an important role in many fields, such as smart cities\cite{xu2023quantum}, Internet of Vehicles\cite{shen2022secure}, and metaverses\cite{kang2022blockchain}.} In Web 3.0, there are plenty of transactions of user-generated data and digital products (e.g., Non-Fungible Tokens (NFTs)) between users, which requires the use of blockchain technologies for the secure storage and efficient management of transactions\cite{chen2022digital}. {Therefore, blockchains are regarded as indispensable and core technologies for Web 3.0\cite{xu2023quantum}.}
%% 讨论区块链性能的时候引出block propagation optimization的重要性

{Although Web 3.0 is more accessible, efficient, and intelligent than previous generations, it still faces many challenges, such as improving blockchain performance\cite{xu2023quantum, lin2023unified}.} Especially, in public blockchains, a new block is broadcast randomly to most miners (or even all miners) in the miner network for validation, which causes large overall propagation time\cite{nakamoto2008bitcoin}. {When block propagation time in the network is too long, it may lead to insufficient signature collections or excessive numbers of forks\cite{9357330}. Besides, too large propagation delay may significantly prolong the generation interval of blocks, which results in poor block freshness. Therefore, to effectively improve blockchain performance, optimizing block propagation is critical\cite{zhang2022speeding, 10542397}. Some efforts have been conducted to optimize block propagation \cite{9917546, sallal2022security, li2021enhancing, ersoy2018transaction}, but they do not consider block freshness.} In the literature, Age of Information (AoI) is a well-accepted metric to quantify data freshness, but it ignores the data processing procedure\cite{ying2022aoti}. Recent studies like Age of Processing (AoP)\cite{li2021age} and Age of Task (AoT)\cite{8902529} improve the AoI by taking data processing time into account. However, they cannot quantify block propagation delay based on the random propagation of public blockchains.

To address the above challenges, in this paper, we first propose a novel metric called Age of Block Information (AoBI) based on the concept of AoI to measure block freshness in public blockchains, {where block freshness represents the timeliness of the information contained within the block, indicating the degree to which the block reflects the most up-to-date state of the distributed ledger. Second, since a new block spreads across the miner network in the form of \textit{rumor mongering} \cite{kan2018boost}, the dynamic behavior of miners during the new block spreading can be captured by epidemic models\cite{jiang2023approaching}. Based on the social theory that is used to study and reflect social phenomena, we propose a block propagation model for public blockchains inspired by epidemic models, which makes the block propagation optimization tractable. Furthermore, considering that the block propagation process is a dynamic scenario, and miners are non-cooperative when they propagate the block, we formulate an incentive mechanism based on the evolutionary game theory for miners to propagate a new block rationally rather than randomly, thus optimizing block propagation and ensuring block freshness. Based on extensive simulations, we discover factors that affect blockchain performance with security. The main contributions of this paper are summarized as follows:}
\begin{itemize}
    \item {To measure block freshness for public blockchains, we design a novel metric called AoBI based on the concept of AoI, which considers the procedures of block processing, block validation, and block propagation.}
    \item {To make the block propagation optimization tractable, we innovatively classify miners into five different states according to different behaviors during miners propagating the block. We then propose a block propagation model inspired by epidemic models for public blockchains.}
    \item {To achieve block propagation optimization, we formulate an incentive mechanism based on the evolutionary game from the perspective of block validation and block propagation, which considers the rationality of miners by analyzing their behaviors dynamically.}
    \item To highlight the improvement of block propagation, we conduct extensive simulations on the proposed incentive mechanism. By comparing with other block propagation mechanisms, the block forwarding probability of the proposed incentive mechanism is higher and reaches the upper bound faster, which demonstrates the efficiency of the proposed scheme.
\end{itemize}

The remainder of the paper is organized as follows: In Section \ref{relate}, we review the related work. In Section \ref{framework}, we propose a freshness-aware block propagation optimization framework for Web 3.0. In Section \ref{AoBI_minmization}, we formulate the average AoBI minimization problem. Section \ref{optimize} presents the block propagation model for public blockchains and the incentive mechanism based on the evolutionary game for optimizing block propagation. Section \ref{result} presents the experiment results. Finally, Section \ref{conclude} concludes this paper.

\section{Related Work}\label{relate}
\subsection{Blockchain-enabled Web 3.0}
Web 3.0, also known as the semantic web, is the next frontier in web development based on Artificial Intelligence (AI), the Internet of Things (IoT), and blockchain technologies. 
%In 2014, Ethereum co-founder Gavin Wood first introduced the concept of blockchain-enabled Web 3.0, which utilizes blockchain technologies to achieve user-generated content and user-selected authorities in a transparent and secure manner\cite{lin2023unified}. 
Nowadays, blockchain-enabled Web 3.0 has attracted significant attention from both academics and industry, and some efforts have been conducted to achieve blockchain-enabled Web 3.0\cite{liu2021make,ragnedda2019blockchain,xu2023quantum,lin2023unified}. For example, Ragnedda \textit{et al.} \cite{ragnedda2019blockchain} discussed how the advent of blockchain technologies brings the third era of the web, i.e., Web 3.0. Xu \textit{et al.} \cite{xu2023quantum} proposed a quantum blockchain-driven Web 3.0 framework that comprises an enabling infrastructure, quantum cryptography protocols, and quantum blockchain-based services. Furthermore, the authors explored potential challenges and applications of implementing quantum blockchain in Web 3.0 \cite{xu2023quantum}. Lin \textit{et al.} \cite{lin2023unified} proposed an integrated framework connecting semantic ecosystems and blockchain for wireless edge-intelligence enabled Web 3.0, which can avoid information overloading to users. Moreover, the authors proposed an adaptive Deep Reinforcement Learning (DRL)-based sharding mechanism to improve interaction efficiency, thus improving the performance of Web 3.0 services\cite{lin2023unified}.{However, most existing works do not consider the optimization of blockchain performance to fundamentally improve Web 3.0 performance. Therefore, it is necessary to optimize blockchain performance to enable Web 3.0, especially in optimizing block propagation.}
%AoI+AoI变种 not considering blockchains

\subsection{Data Freshness Metrics}
{As a well-established metric, AoI is defined as the elapsed time from the generation of the latest received status update\cite{yates2021age}, which can effectively quantify data freshness at the destination\cite{Jinbo}. AoI has been widely used in many applications, such as federated learning\cite{kang2022blockchain}, AI-generated content networks\cite{wen2023freshness}, and metaverses\cite{Jinbo,kang2022blockchain}. However, AoI completely ignores the data processing procedure.} Therefore, some works take the data processing procedure into account to improve the AoI\cite{ying2022aoti,li2021age,8902529}. Ying \textit{et al.}\cite{ying2022aoti} proposed a novel metric called Age of Task-oriented Information (AoTI) to measure the freshness of industrial tasks in industrial wireless sensor networks. Li \textit{et al.}\cite{li2021age} proposed a new metric called AoP to quantify the freshness of status data in intelligent IoT applications, such as video surveillance, which is defined as the time elapsed since the newest received processed status data is generated. Song \textit{et al.}\cite{8902529} proposed a performance metric called AoT to evaluate the temporal value of computation tasks, which is defined as the time elapsed since the first unprocessed task left in the queue is generated.  {However, the existing metrics cannot quantify the block propagation delay of public blockchains due to the characteristic of random propagation. Motivated by the above works, we aim to design a new metric to measure block freshness for public blockchains.} 
%aoi+路由优化 (没有同时考虑AoI本身的优化和路由优化，也没有应用到public blockchain场景)

\subsection{Incentive Mechanisms for Blockchain Networks}
The integration of blockchain and incentive mechanisms is a hot topic for blockchain enhancement\cite{jiao2019auction, wang2023connectivity,li2021contract}. Jiao \textit{et al.} \cite{jiao2019auction} proposed an auction-based market model to allocate computing resources in blockchain networks. Wang \textit{et al.} \cite{wang2023connectivity} proposed a multidimensional contract to incentive IoT devices to join the construction of the wireless blockchain network. Li \textit{et al.} \cite{li2021contract} proposed two joint models under the contract theory to bridge blockchain and IoT users, which balances the security incentive and economic incentive. However, the above works do not tackle the problem of incentive mechanism design for optimizing block propagation to enhance the performance of the blockchain system.

Research on optimizing block propagation for public blockchains can be divided into three directions: 1) \textit{Optimizing blockchain network topology}\cite{9917546,sallal2022security}; 2) \textit{Optimizing block verification}\cite{decker2013information,li2021enhancing}; 3) \textit{Optimizing the propagation behavior of miners}\cite{ersoy2018transaction}. 
From the perspective of optimizing blockchain network topology, Wang \textit{et al.}\cite{9917546} proposed a broadcasting mechanism that optimizes the blockchain network topology and broadcasts the transmission process based on unsupervised learning and greedy algorithms, thus reducing the propagation latency of the blockchain network. Sallal \textit{et al.} \cite{sallal2022security} proposed a clustering protocol that divides a blockchain network into several clusters and selects a master miner for every cluster, thus increasing blockchain network connectivity and decreasing block propagation delay. From the perspective of optimizing block verification, Li \textit{et al.} \cite{li2021enhancing} proposed a probabilistic verification scheme to reduce block propagation delay, where each miner can choose whether to verify the new block based on a probability. Decker \textit{et al.}\cite{decker2013information} proposed a protocol that minimizes block verification and pipelines block propagation, thereby reducing block propagation delay. From the perspective of optimizing the propagation behavior of miners, few works have been conducted on incentive mechanism design for optimizing the propagation behavior of miners. Ersoy \textit{et al.} \cite{ersoy2018transaction} proposed a propagation mechanism to encourage miners to propagate messages and a routing mechanism to reduce the redundant communication cost.

{However, most existing works do not take miner rationality and block freshness into account when optimizing block propagation. In Web 3.0, rational users can freely engage and collaboratively manage this ecosystem\cite{liu2021make}. Therefore, it is still challenging to optimize block propagation by considering miner rationality and block freshness.
%In summary, most existing works do not take miner rationality into account when optimizing block propagation. 
Motivated by the aforementioned research gaps, we propose a freshness-aware block propagation optimization framework for Web 3.0.} 

%\textit{To the best of our knowledge, this is the first work to improve the performance of blockchain-enabled Web 3.0 by optimizing block propagation integrated with block freshness}.

\section{Freshness-aware Block Propagation Optimization Framework for Web 3.0}\label{framework}
\begin{figure*}[t]
\vspace{-0.5cm}
\centerline{\includegraphics[width=0.88\textwidth]{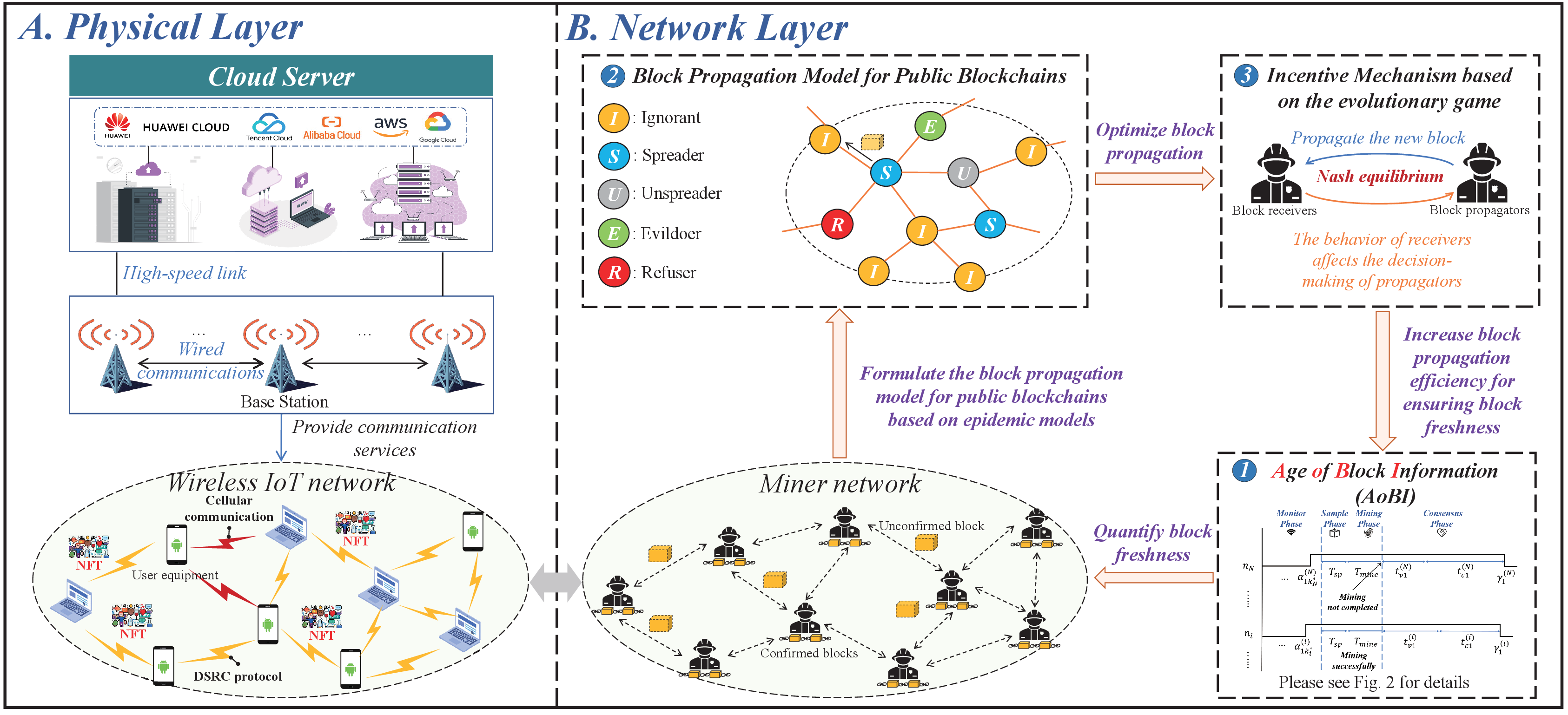}}
\captionsetup{font=footnotesize}
\caption{A freshness-aware block propagation optimization framework for Web 3.0.}     \label{system_model}     
\end{figure*}
%%Scene Name needs to be rethought.
%Recently, \textit{Age of Information} (AoI) as a data-freshness metric has been widely used to quantify information freshness at the destination, which is defined as the elapsed time from the generation of the latest received status update. However, it completely ignores the data processing procedure. In this section, to better quantify the freshness of public blockchain, we propose a new metric called Age of Block-oriented Information (AoBI) by considering collecting transaction, block validation and block propagation, which is defined as the time elapsed from the latest transaction generation to become accessible to the intended miners. %The goal of the proposed optimization framework is to provide miners with the freshest information under communication and computation constraints.
In Web 3.0, we consider that the wireless blockchain network consists of a set $\mathcal{N} = \left\{n_1,\ldots, n_i,\ldots, n_N\right\}$ of $N$ IoT devices (i.e., miners), a set $\mathcal{B} = \left\{b_1,\ldots, b_j,\ldots, b_M\right\}$ of $M$ Base Stations (BSs), and sufficient cloud-based computing resources. Each IoT device makes updated data accessible to other devices by using the public blockchain technology that is a key driving force for enabling Web 3.0.

%where each cloud computing server can provide cloud-based computing resources $C$ (\textit{IPS}) at most.

As shown in Fig. \ref{system_model}, miners can monitor physical data, such as NFT transactions and sensing data from the surrounding environment, and generate packets with newly observed data. Then, miners deliver the packets in the form of transactions with a timestamp to their own local \textit{mempools}. For miner $n_i$, it generates transactions independently and regularly at the time point $\alpha_{mk}^{(i)}$ under the Poisson distribution of rate $\lambda_{i}$\cite{rovira2019optimizing}, where $m, k \in \mathbb{Z^+}$ represent the $k$-th transaction generation in the $m$-th consensus. Therefore, the time interval between transactions with the same sequence in two consecutive consensuses follows the exponential distribution, i.e., $\Big(\Gamma_{m}^{(i)} = \alpha_{(m+1)k}^{(i)} - \alpha_{mk}^{(i)}\Big) \sim Exp(\lambda_{i})$ with the expected value $(1/\lambda_{i})$\cite{rovira2019optimizing}. 

%we define the maximum number of transactions in the block as $B_{max}$

To limit the waiting time of a transaction in the mempool, we define a packing period as $T_{p}$ and the maximum number of transactions in the block as $B_{max}$. During the packing period $T_{p}$, miners pick the freshest transactions with high transaction fees from their own local mempools and put the transactions into a block\cite{li2018transaction}. Then, miners aim to solve a cryptographic puzzle to obtain the bookkeeping right during the mining period $T_{mine}$\cite{nakamoto2008bitcoin}. Once a miner takes the lead in solving the cryptographic puzzle, its block $I_m$ (i.e., the $m$-th consensus block) will be forwarded immediately to the miner network for validation at the time point $\beta_m$. When a miner $n_l\in \mathcal{N}$ approves the new block $I_m$, the block will be available to its $k$ adjacent miners at the time point $\gamma_m^{(l)}$ after undergoing random validation time $t_{vm}^{(l)}$ and random communication time $t_{cm}^{(l)}$. Otherwise, the block will not be forwarded to avoid wasting network resources. We define the \textit{packing rate}\footnote{Note that the packing rate is the number of transactions packed into a block per second. Please refer to \url{https://cryptowallet.com/glossary/mempool/} for more details.} of miner $n_i$ as $\tau_i$ ($\mathrm{transaction/s}$)\cite{li2018transaction}. Due to insufficient energy and computing capacity of miners, the tasks of completing computation-based competitive consensus (e.g., Proof of Work \cite{nakamoto2008bitcoin, shi2022pooling}) and validating the new block require cloud-based computing resources\cite{jiao2019auction}. Moreover, the hash power has little difference among devices in the IoT network\cite{wang2023connectivity}. Thus, we consider that cloud-based computing resources allocated to each miner are $C\frac{\tau^2}{\sum_{i=1}^N\tau_i^2}$\footnote{Note that cloud servers allocate computing resources to miners based on the weight of $\tau^2$. On the one hand, the bigger $\tau$, the bigger $B_{size}$, making the computing resources allocated to miners more, which can avoid empty consensus blocks and restrain the low transaction rate of the system. On the other hand, the impact of $\tau$ on resource allocations can be increased.}\cite{qiu2019cloud}, where $C$ represents computing resources provided by a cloud computing server and $\tau$ represents the packing rate of the miner that obtains the bookkeeping right.

Although public blockchains can securely manage data and ensure data integrity for Web 3.0, they cannot guarantee the freshness of data packed into the block\cite{kim2022ensuring}. The use of outdated data for decision-making may cause incorrect outputs, further compromising the performance of the whole system\cite{kim2022ensuring}. Therefore, it is important to guarantee block freshness to ensure data freshness in Web 3.0. To measure block freshness, we propose a new metric called AoBI based on the concept of AoI. AoBI is defined as the time elapsed from the freshest transaction generation to the completion of block consensus, as shown in Fig. \ref{AoBI_component}.

\begin{figure}[t]
%\vspace{-0.4cm}
\centerline{\includegraphics[width=0.45\textwidth]{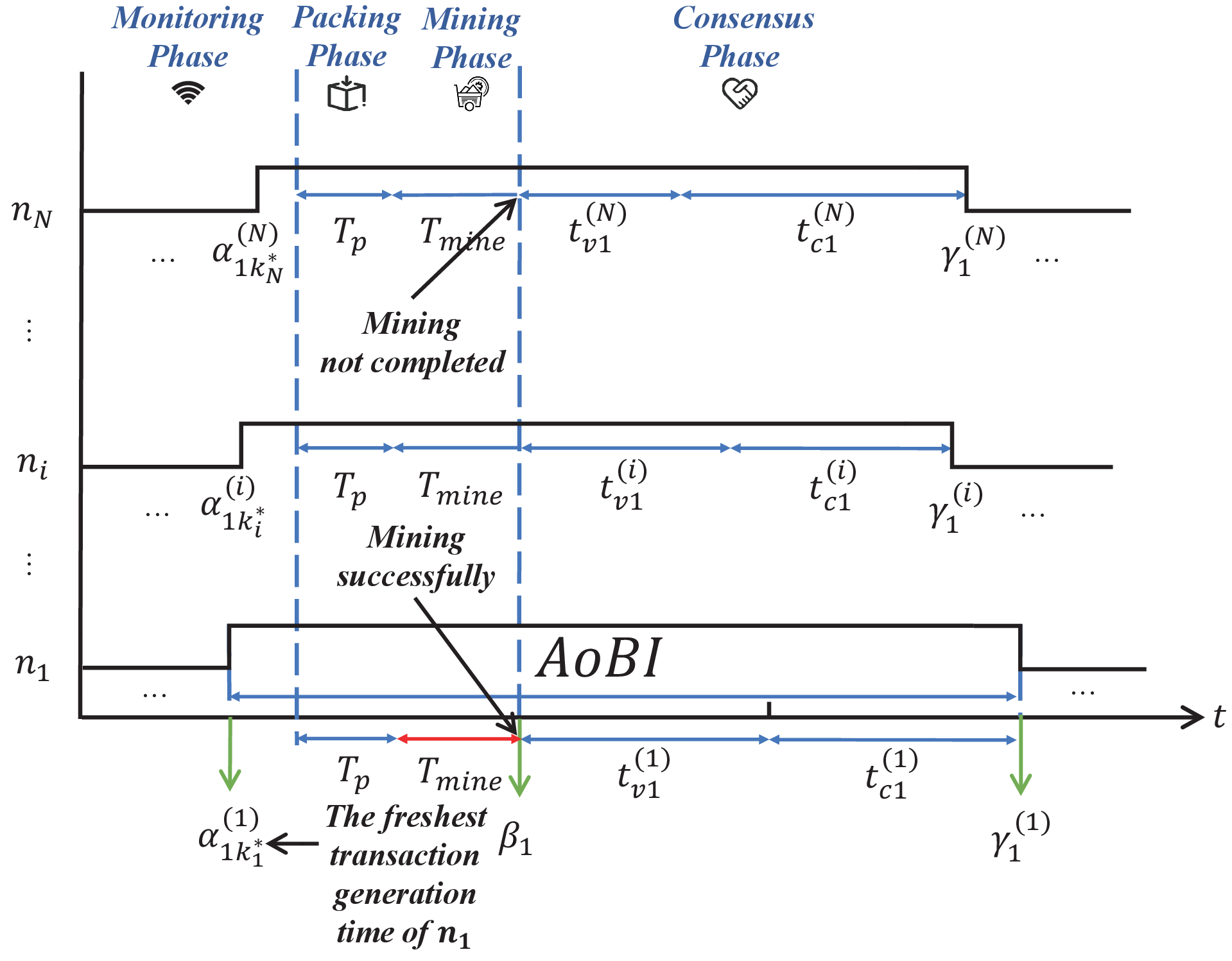}}
%\captionsetup{justification=raggedright,singlelinecheck=false, font=footnotesize}
\captionsetup{font=footnotesize}
\caption{Age of Block Information for public blockchains.}
\label{AoBI_component}
%\vspace{-0.2cm}
\end{figure}

\section{Problem Formulation for Average AoBI Minimization}\label{AoBI_minmization}
In this section, we first propose a novel metric called AoBI to measure block freshness, which considers the block processing procedure, block validation, and block propagation. 

After mining, miners continue to monitor valuable data (e.g., NFT transactions) to ensure that valuable data can be updated in a timely manner. Therefore, we can obtain a constraint, which is given by
\begin{equation}
\begin{aligned}
    1/\lambda_{i} \geq T_{p} + T_{mine}.
\end{aligned}
\end{equation}

%The validation time of overall network $T_{vm}$ is also an exponentially distributed random variable $T_{vm} \sim Exp(\Xi)$ with mean time $(1/\Xi)$ ($s$).
According to \cite{rovira2019optimizing}, the validation time $t_{vm}^{(i)}$ of miner $n_i$ is an exponentially distributed random variable $t_{vm}^{(i)} \sim Exp(\xi_i)$ with mean time $(1/\xi_i)$ depending on its computing capability. Thus, the average validation time of miner $n_i$ \cite{rovira2019optimizing} is\\
\begin{equation}\label{t_vm}
\begin{aligned}
    \mathbb{E}[t_{vm}^{(i)}] = 1/\xi_i &\geq \bigg[(R_v B_{size})\bigg/\bigg(C\frac{\tau^2}{\sum_{i=1}^N\tau_i^2}\bigg)\bigg]\\
    &\quad\: =\frac{R_vT_{p}\sum_{i=1}^N\tau_i^2}{C\tau},
\end{aligned}
\end{equation}
where $R_v$ represents the required number of instructions for a transaction to get validated by miners and $(B_{size}=\tau T_p)$ represents the number of transactions in the block. Note that the value range of $\tau$ is $\frac{1}{T_{p}} \leq \tau \leq \frac{B_{max}}{T_{p}}$ because of $1 \leq B_{size} \leq B_{max}$.

Based on the property of mean inequality, we know that
\begin{equation}
    \begin{aligned}
        \tau_1^2 + \tau_2^2 + \cdots + \tau_N^2 \geq \frac{(\tau_1 + \tau_2 + \cdots + \tau_N)^2}{N} = N\overline{\tau}^2,
    \end{aligned}
\end{equation}
if and only if $\tau_1 = \tau_2 = \cdots = \tau_N$, the equal sign holds. Considering $\tau_i$ following the uniform distribution in $\Big[0,\frac{B_{max}}{T_{p}}\Big]$, the average packing rate of miners is 
\begin{equation}
\begin{aligned}
    \overline{\tau} = \mathbb{E}[\tau_i] = \int_{0}^{\frac{B_{max}}{T_{p}}} \tau_i \frac{1}{\frac{B_{max}}{T_{p}}} \mathrm{d}\tau_i= \frac{B_{max}}{2T_{p}}.
\end{aligned}
\end{equation}
Thus, (\ref{t_vm}) can be rewritten as 
\begin{equation}
\begin{aligned}
    \mathbb{E}[t_{vm}^{(i)}] = 1/\xi_i \geq \frac{R_v N B_{max}^2}{4C\tau T_{p}}.
\end{aligned}
\end{equation}

The overall validation time of the network $T_{vm}$ is also an exponentially distributed random variable $T_{vm} \sim Exp(\Xi)$ with mean time $(1/\Xi)$ \cite{rovira2019optimizing}. We define $\overline{\omega}$ as the average density of adjacent miners that are willing to forward the block.

\begin{proposition}\label{P_1}
The average validation time of the miner network satisfies the following condition:
\begin{equation}
\mathbb{E}[T_{vm}] = 1/\Xi \geq \left\{
\begin{aligned}
    &\:  \Bigg\lceil \log_{\overline{\omega}k}\bigg(\frac{N(\overline{\omega}k-1)+k}{k}\bigg) \Bigg\rceil\frac{R_v N B_{max}^2}{4C\tau T_{p}}, \\
    &\qquad \qquad \qquad \quad\overline{\omega}k > 1,\\
    &\:  \bigg\lceil \frac{N}{k} \bigg\rceil \frac{R_v N B_{max}^2}{4C\tau T_{p}},\quad \overline{\omega}k = 1.
    %存在验证先后的情况，所以需要乘logkN
\end{aligned}
\right.
\end{equation}
\end{proposition}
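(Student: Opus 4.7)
The plan is to treat the block's diffusion through the miner network as a sequence of discrete validation rounds, each lasting at least one per-miner validation time, which by the preceding paragraph satisfies $\mathbb{E}[t_{vm}^{(i)}] \geq R_v N B_{max}^2 / (4C\tau T_{p})$. In every round, each miner that already holds the block and is willing to forward sends it to its $k$ adjacent miners, of whom on average $\overline{\omega}k$ will themselves become forwarders in the next round. Bounding the number of rounds required to cover all $N$ miners and then multiplying by the per-miner validation time will produce the two cases stated in the proposition.

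First I would set up a simple recursion for the number of forwarders $F_r$ in round $r$: starting from $F_0 = 1$ (the miner who won the consensus), the branching factor gives $F_r = (\overline{\omega}k)^r$, and hence the cumulative count of validated miners after $r$ rounds is the geometric sum $S_r = k\sum_{i=0}^{r-1}(\overline{\omega}k)^i$. Evaluating this sum in closed form yields $S_r = k\bigl[(\overline{\omega}k)^r - 1\bigr]/(\overline{\omega}k - 1)$ when $\overline{\omega}k > 1$ and $S_r = rk$ when $\overline{\omega}k = 1$, which is the sole source of the case split in the proposition.

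Next, I would impose $S_r \geq N$ so that the block reaches every miner in the network. Solving for $r$ gives $r \geq \log_{\overline{\omega}k}\!\bigl((N(\overline{\omega}k-1)+k)/k\bigr)$ in the first branch and $r \geq N/k$ in the second. Applying the ceiling to recover an integer round count and multiplying by the lower bound on the single-miner validation time produces exactly the two inequalities stated.

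The main obstacle will be justifying that the expected network validation time truly dominates (number of rounds) $\times$ (per-miner expectation), rather than being dictated, say, by the maximum over parallel validators within a single round. I would argue this by tracking a deepest root-to-leaf chain in the propagation tree: a miner at depth $r$ cannot begin its own validation until all $r$ of its ancestors have serially finished, so by linearity of expectation applied to the exponential validation times along this chain, $\mathbb{E}[T_{vm}] \geq r\,\mathbb{E}[t_{vm}^{(i)}]$. Additional care is needed at the boundary $\overline{\omega}k = 1$, where the geometric closed form degenerates, and at the ceiling operator when the logarithm is already an integer, to ensure the inequality is consistent with $S_r \geq N$.
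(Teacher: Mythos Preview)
Your proposal is correct and follows essentially the same approach as the paper: both count newly reached miners per round via the geometric sum $k\sum_{i=0}^{r-1}(\overline{\omega}k)^i$, impose $S_r\geq N$, solve for the integer number of rounds in the two cases $\overline{\omega}k>1$ and $\overline{\omega}k=1$, and then multiply by the per-miner validation bound $R_vNB_{max}^2/(4C\tau T_p)$. Your chain-depth justification for why $\mathbb{E}[T_{vm}]$ dominates the product is in fact more explicit than what the paper provides, which simply asserts the multiplication without further comment.
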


\begin{proof}
    Please see Appendix (A).
\end{proof}
%From equation (2), we know that a miner's average validation time increases with the number of miners and collecting transaction rate, but it also causes the blockchain scalability issue. If $R_v = 10^9$, $N = 1000$, $\tau = 1\:(transaction/s)$, and $T_{sp}=100s$, the total computation power of validating block is $10^{14}$ instructions. If the network mining power is $C = 10^{12}$ instructions per second (IPS), the average validation time of a miner is about $100$ seconds.

%Each IoT device connects through a BS $\left(b_j = b(n_i)\right)$ to the entire service network to make the updated information accessible to other devices by using the public blockchain\cite{rovira2019optimizing}, where public blockchains are the key driving force for Web 3.0.

Similarly, it is shown that exponential distribution is a reasonable model for communication time, i.e., $t_{cm}^{(i)} \sim Exp(\eta_i)$, which includes the link establishment, actual transmission, and propagation through the network\cite{7021909}. Therefore, the overall communication time of the network $T_{cm}$ is also an exponentially distributed random variable $T_{vm} \sim Exp(H)$ with mean time $(1/H)$\cite{7021909}. On the one hand, the communications between any two IoT devices over a short distance or IoT devices and BSs adopt the Dedicated Short Range Communication (DSRC) protocol\cite{wang2022block}. On the other hand, the communications between any two IoT devices over a long distance need the relay of BSs. Since BSs use wired links to communicate with each other due to the fixed location, data propagation time between BSs can be negligible\cite{wang2022block}. If a BS $b_j$ covers a set $\mathcal{B}_j = \left\{n_{j1}, n_{j2},\ldots, n_{i},\ldots, n_{j{N_j}}\right\}\subset \mathcal{N}$ of $N_j$ miners, without loss generality, we consider that its total bandwidth $W_j$ is evenly assigned among the miners that are willing to forward the block\cite{rovira2019optimizing}. Thus, the average communication time of miner $n_i$ \cite{rovira2019optimizing} is given by
\begin{equation}
\begin{aligned}
    \mathbb{E}[t_{cm}^{(i)}] = 1/\eta_i &\geq \bigg[\left(\frac{P_{size}\tau T_{p}}{R_c}\right)\bigg/\left(\frac{W_j}{\overline{\omega} |\mathcal{B}_j|}\right)\bigg]\\
    &\quad \:= \frac{P_{size} \tau T_{p} \overline{\omega} |\mathcal{B}_j|}{R_c W_j},
\end{aligned}
\end{equation}
where $P_{size}(\textit{\rm{bit}})$ is the average size of a transaction and $R_c$ is the effective bit rate per unit bandwidth for the utilized networking technology.

According to \cite{rovira2019optimizing}, it is reasonable to consider that i) BSs are uniform, meaning that the total bandwidth of each BS is identical; ii) Miners and BSs are uniformly distributed, respectively; iii) The association between miners and BSs is uniform, meaning that for a network with $M$ BSs and $N$ miners, each BS covers $(N/M)$ miners, i.e., $|\mathcal{B}_j| = N/M$.

\begin{proposition}\label{P_2}
The average communication time of the miner network satisfies the following conditions:
\begin{equation}
\mathbb{E}[T_{cm}] = 1/H \geq \left\{
\begin{aligned}
    &\:  \Bigg\lceil \log_{\overline{\omega}k}\bigg(\frac{N(\overline{\omega}k-1)+k}{k}\bigg) \Bigg\rceil\frac{P_{size} \tau T_{p} \overline{\omega} N}{M R_c W}, \\
    &\qquad \qquad \qquad \quad\overline{\omega}k > 1,\\
    &\:  \bigg\lceil \frac{N}{k} \bigg\rceil \frac{P_{size} \tau T_{p} \overline{\omega} N}{M R_c W}, \quad \overline{\omega}k = 1.
\end{aligned}
\right.
\end{equation}
\end{proposition}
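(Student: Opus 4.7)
The plan is to mirror the argument used for Proposition \ref{P_1}, substituting the per-miner communication time for the per-miner validation time while preserving the epidemic-style counting of propagation rounds. Since Proposition \ref{P_1} has already established that the block needs a certain number of ``hops'' to traverse the miner network under the branching factor $k$ and forwarding density $\overline{\omega}$, the combinatorial/epidemic core of the argument is reusable; only the per-hop time bound needs to be swapped out.

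First I would simplify the per-miner communication time bound by invoking the three uniformity assumptions stated just before the proposition: identical per-BS bandwidth $W_j = W$, uniform spatial distribution of miners and BSs, and uniform association giving $|\mathcal{B}_j| = N/M$. Plugging these into
\begin{equation}
\mathbb{E}[t_{cm}^{(i)}] = 1/\eta_i \geq \frac{P_{size}\,\tau\,T_{p}\,\overline{\omega}\,|\mathcal{B}_j|}{R_c\,W_j}
\end{equation}
yields $\mathbb{E}[t_{cm}^{(i)}] \geq \frac{P_{size}\,\tau\,T_{p}\,\overline{\omega}\,N}{M\,R_c\,W}$, which is exactly the per-hop factor appearing in the two cases of the proposition statement. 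This reduction is a direct substitution, so it requires no new probabilistic argument.

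Next I would reuse the epidemic-spreading count from Proposition \ref{P_1}. The idea there was that in each round a willing miner forwards to $k$ adjacent miners and only an $\overline{\omega}$-fraction of recipients are willing to re-forward, so after $r$ rounds the block has reached on the order of $k \sum_{j=0}^{r-1}(\overline{\omega}k)^{j}$ miners; setting this $\geq N$ and solving for $r$ gives $r \geq \lceil \log_{\overline{\omega}k}\!\big(\tfrac{N(\overline{\omega}k-1)+k}{k}\big)\rceil$ in the super-critical case $\overline{\omega}k > 1$, while the critical case $\overline{\omega}k = 1$ collapses the geometric series to an arithmetic one yielding $\lceil N/k \rceil$. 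Because communication and validation obey the same branching topology (the same adjacency graph and the same set of willing forwarders is being used), the number of propagation rounds is identical. Multiplying the number of rounds by the per-hop lower bound on $\mathbb{E}[t_{cm}^{(i)}]$ then gives the two cases of the proposition.

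The main obstacle I anticipate is formal rather than conceptual: justifying that the overall communication time $T_{cm}$ is lower-bounded by the product (rounds) $\times$ (per-hop mean), as opposed to being dominated by a single slow path. This follows because each round is sequential in the pipeline—a miner cannot forward until it has received and validated—so the hop-count serves as a lower bound on the serial delay, and by linearity of expectation the expected serial delay is bounded below by the number of rounds times the minimum per-hop expected communication time. I would state this briefly and then defer the remaining algebra to the already-established derivation in Appendix (A), simply pointing out that the substitution $\frac{R_v N B_{max}^2}{4C\tau T_p} \mapsto \frac{P_{size}\,\tau\,T_{p}\,\overline{\omega}\,N}{M\,R_c\,W}$ completes the proof.
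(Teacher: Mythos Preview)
Your proposal is correct and follows essentially the same approach as the paper: Appendix~(A) proves Propositions~\ref{P_1} and~\ref{P_2} jointly by deriving the number of propagation rounds $(m+1)$ from the geometric coverage inequality $k+\overline{\omega}k^2+\cdots+\overline{\omega}^m k^{m+1}\geq N$, then multiplying by the per-hop time bound, which for Proposition~\ref{P_2} is exactly the simplified $\frac{P_{size}\tau T_p \overline{\omega}N}{M R_c W}$ you obtained from the uniformity assumptions. Your extra paragraph justifying the ``rounds $\times$ per-hop mean'' lower bound via sequentiality and linearity of expectation is more than the paper itself provides, but the underlying argument is the same.
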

\begin{proof}
    Please see Appendix (A).
\end{proof}

In public blockchains, similar to \cite{ying2022aoti}, the AoBI consists of three parts: 1) The time from the freshest transaction generation to the mining end; 2) Block validation time; 3) Block propagation time. Therefore, the AoBI for the block $I_m$ and miner $n_i$ is given by
\begin{equation}\label{AoBI_I}
\begin{aligned}
    AoBI\left(I_m, n_i\right) &= \gamma_m^{(i)} - \max\left\{\alpha_{mk}^{(i)} \middle| \alpha_{mk}^{(i)} < \beta_m\right\}\\
    &= \left(\beta_m - \max\left\{\alpha_{mk}^{(i)} \middle| \alpha_{mk}^{(i)} < \beta_m\right\}\right)\\
    &\quad + t_{vm}^{(i)} + t_{cm}^{(i)}.
\end{aligned}
\end{equation}
The first term in (\ref{AoBI_I}) is the time passed since the freshest transaction generation $\alpha_{mk^*}^{(i)} = \max\left\{\alpha_{mk}^{(i)} \middle| \alpha_{mk}^{(i)} < \beta_m\right\}$ until the mining end $\beta_m$, where $\beta_m$ splits the interval between the two correspondingly freshest transaction generation $\left[\alpha_{mk^*}^{(i)}, \alpha_{(m+1)k^*}^{(i)}\right]$ into two intervals $\mu = \left[\alpha_{mk^*}^{(i)}, \beta_m\right]$ and $\nu = \left[\beta_m, \alpha_{(m+1)k^*}^{(i)}\right]$. Since $\left\{\beta_1, \beta_2, \ldots\right\}$ is independent of the transaction generation process $\left\{\alpha_{1j}^{(i)}, \alpha_{2j}^{(i)}, \ldots\right\}$, the cut point is uniformly distributed in $\left[\alpha_{mk^*}^{(i)}, \alpha_{(m+1)k^*}^{(i)}\right]$\cite{rovira2019optimizing}, and we can obtain $\mathbb{E}[\mu] = \mathbb{E}[\nu] = 1/(2\lambda_{i})$. Therefore, the average AoBI for miner $n_i$ is
\begin{equation}
\begin{aligned}
    \overline{AoBI}(n_i) &= \mathbb{E}\Big[\left(\beta_m - \max\left\{\alpha_{mk}^{(i)} \middle| \alpha_{mk}^{(i)} < \beta_m\right\}\right)\\ &\quad + t_{vm}^{(i)} + t_{cm}^{(i)}\Big] = \frac{1}{2\lambda_{i}} + \frac{1}{\xi_i} + \frac{1}{\eta_i}.
\end{aligned}
\end{equation}
Considering that the monitoring time of each miner is identical, the average AoBI for public blockchains is given by
\begin{equation}\label{AoBI}
\begin{aligned}
    \overline{AoBI} =\frac{1}{2\lambda}+ \mathbb{E}[T_{vm}] + \mathbb{E}[T_{cm}] = \frac{1}{2\lambda} + \frac{1}{\Xi} + \frac{1}{H}.
\end{aligned}
\end{equation}

To obtain the minimum value of average AoBI, we minimize (\ref{AoBI}) subject to the previously mentioned constraints as follows:

\begin{problem}\label{problem_1}
       When $\overline{\omega}k > 1$, the average AoBI minimization problem is
        \begin{equation}\label{A_pb_1}
        \begin{split}
        \overline{A}_{pb} &= \min_\tau\Big(\frac{1}{2\lambda} + \frac{1}{\Xi} + \frac{1}{H}\Big)\vspace{10ex}\\
        &\rm{s.t.}\left\{\begin{array}{lc}
        0 < \lambda \leq \frac{1}{T_{p}+T_{mine}},\vspace{1ex}\\
        0 < \Xi \leq \frac{4C\tau T_{p}}{\big\lceil \log_{\overline{\omega}k}\big(\frac{N(\overline{\omega}k-1)+k}{k}\big) \big\rceil R_v NB_{max}^2},\vspace{1ex}\\
        0 < H \leq \frac{MR_cW }{\big\lceil \log_{\overline{\omega}k}\big(\frac{N(\overline{\omega}k-1)+k}{k}\big) \big\rceil P_{size} \tau T_{p} \overline{\omega} N},
        \end{array}\right.
        \end{split}
        \end{equation}
        where $\frac{1}{T_{p}} \leq \tau \leq \frac{B_{max}}{T_{p}}$.
\end{problem}

\begin{problem}\label{problem_2}
         When $\overline{\omega}k = 1$, the average AoBI minimization problem is
        \begin{equation}\label{A_pb_2}
        \begin{split}
        \overline{A}_{pb} &= \min_\tau\Big(\frac{1}{2\lambda} + \frac{1}{\Xi} + \frac{1}{H}\Big)\vspace{10ex}\\
        &\rm{s.t.}\left\{\begin{array}{lc}
        0 < \lambda \leq \frac{1}{T_{p}+T_{mine}},\vspace{1ex}\\
        0 < \Xi \leq \frac{4C\tau T_{p}}{\big\lceil \frac{N}{k} \big\rceil R_v NB_{max}^2},\vspace{1ex}\\
        0 < H \leq \frac{MR_cW }{\big\lceil \frac{N}{k} \big\rceil P_{size} \tau T_{p} \overline{\omega} N},
        \end{array}\right.
        \end{split}
        \end{equation}
        where $\frac{1}{T_{p}} \leq \tau \leq \frac{B_{max}}{T_{p}}$.
\end{problem}

Note that $\lambda$, $\Xi$, and $H$ are independent of each other. When $\lambda$, $\Xi$, and $H$ take the maximum simultaneously, $\overline{A}_{pb}$ is minimum. Therefore, the minimum value of average AoBI for public blockchains is given by
\begin{equation}
    \overline{A}_{pb}(\tau) = \left\{
    \begin{split}
        & \Bigg\lceil \log_{\overline{\omega}k}\bigg(\frac{N(\overline{\omega}k-1)+k}{k}\bigg) \Bigg\rceil\bigg(\frac{P_{size}T_{p}\overline{\omega}N}{M R_c W }\tau +\\ &\frac{R_v N B_{max}^2}{4CT_{p}\tau}\bigg)
        +\frac{1}{2}(T_{p}+T_{mine}), \quad\overline{\omega}k > 1,\\
        & \bigg\lceil \frac{N}{k} \bigg\rceil\bigg(\frac{R_v N B_{max}^2}{4CT_{p}\tau} +  \frac{P_{size}T_{p}\overline{\omega}N}{M R_c W }\tau\bigg)\\
        & +\frac{1}{2}(T_{p}+T_{mine}), \quad\overline{\omega}k = 1,
    \end{split}
    \right.
\end{equation}
where $\frac{1}{T_{p}} \leq \tau \leq \frac{B_{max}}{T_{p}}$. To improve block freshness for ensuring the performance of blockchain-enabled Web 3.0, it is important to optimize block propagation and decrease the minimum value of average AoBI.

\section{Block Propagation Optimization}\label{optimize}
In this section, we focus on optimizing block propagation by improving block propagation efficiency, which can reduce consensus latency $(\mathbb{E}[T_{vm}] + \mathbb{E}[T_{cm}])$, thereby decreasing the minimum value of average AoBI.

\subsection{Block Propagation Model for Public Blockchains}
\subsubsection{Epidemic models}
Rumor dissemination models are generally built based on epidemic models, such as the Susceptible-Infected-Recovered (SIR) model\cite{zhao2012sihr,liu2016shir}. As one of the most classical epidemic models, the whole population in the SIR model is divided into three groups that are susceptible, infected, and recovered\cite{liu2016shir}. When contact with infected individuals, susceptible individuals check the disease and become infected states with a certain probability\cite{liu2016shir}. As time progresses, infected individuals no longer contract the disease anymore and become recovered states with a certain probability, indicating that infected individuals are cured or die\cite{liu2016shir}. {Since a new block is propagated in the form of rumor dissemination, the epidemic model is useful for modeling block propagation in public blockchains due to its ability to capture the behavior of information dissemination. Besides, the redundancy and fault-tolerant nature of the epidemic model contribute to alleviating the impact caused by the dynamics and complexity of the blockchain network.}

\subsubsection{Model formulation} Due to negligible propagation time between BSs, we ignore the transit process between BSs and regard the interaction between miners as a peer-to-peer interaction, which can better study the block propagation process. According to \cite{decker2013information}, when miners complete the block validation, they will forward the block to their adjacent miners if they approve the block. Otherwise, they will not forward the block. Finally, whether the block can be successfully added to blockchains depends on the approval results of all miners. {By analyzing the dynamic behavior of miners during block propagation\cite{ferdous2021survey}, miners can be divided into five groups based on epidemic models, namely \textit{Ignorants}, \textit{Spreaders}, \textit{Unspreaders}, \textit{Refusers}, and \textit{Evildoers}. Specifically, ignorants are initial miners that have not received the new block. Spreaders are miners that approve the block and forward it to their $k$ adjacent miners. Unspreaders are miners that do not approve the block and discard it, which can avoid wasting network resources. Refusers are immune miners that will not receive and forward the block\footnote{Note that spreaders will send \textit{inv} messages to their adjacent miners before forwarding the block\cite{decker2013information}. When refusers receive inv messages, they will not issue \textit{getdata} messages, which indicates that they have completed the block validation and no longer receive the block\cite{vu2019efficient}. Note that unspreaders are the transition status from ignorants to refusers.}. Evildoers are malicious miners that destroy the interests of most miners in the network, e.g., by not forwarding the block deliberately 
\cite{sallal2022security}.} Based on \cite{zhao2012sihr}, the state transition diagram of the block propagation model is shown in Fig. \ref{s_trans}, and the specific conversion rules are described as follows:
\begin{figure}[h]
%\vspace{-0.4cm}
\centerline{\includegraphics[width=0.45\textwidth]{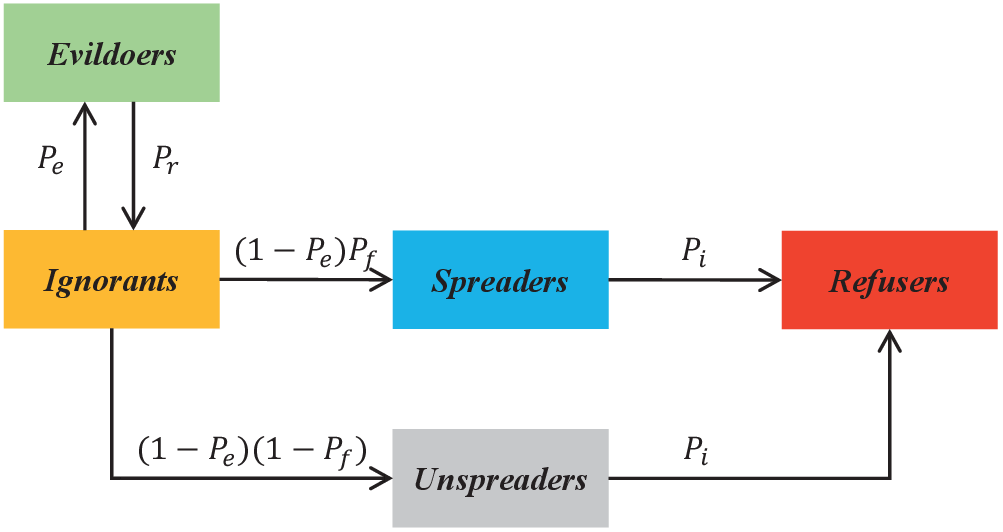}}
\captionsetup{justification=raggedright,singlelinecheck=false, font=footnotesize}
\caption{The state transition diagram of the block propagation model for public blockchains.}
\label{s_trans}
%\vspace{-0.2cm}
\end{figure}
\begin{itemize}
    \item When ignorants receive a block forwarded from spreaders, the ignorants will validate the block and be converted to spreaders with a probability $P_{f}$. Otherwise, the ignorants will be converted to unspreaders with a probability $(1-P_{f})$, where $P_{f} \in [0, 1]$ is the forwarding probability.
    
    \item When each round of interaction starts, ignorants will be converted to evildoers with a probability $P_{e}$. When each round of interaction ends, the evildoers will be converted to ignorants with a probability $P_{r}$, where $P_{e}\in [0,1]$ is the evil probability and $P_{r}\in [0,1]$ is the recovery probability.
    
    \item When spreaders receive the block again or after each interaction, the spreaders will be converted to refusers with a probability $P_{i}$. Similarly, unspreaders will be converted to refusers with a probability $P_i$ after each interaction, where $P_{i}\in(0,1]$ is the immunity probability.
\end{itemize}

We denote $i(t)$, $s(t)$, $u(t)$, $r(t)$, and $e(t)$ as the proportion of ignorants, spreaders, unspreaders, refusers, and evildoers at time $t$, respectively. They satisfy the following normalization condition\cite{zhao2012sihr}: 
\begin{equation}
i(t)+s(t)+u(t)+r(t)+e(t)=1.
\end{equation}
According to Fig. \ref{s_trans}, the variation of $i(t)$ at $\Delta t$ can be obtained as follows:
\begin{dmath}
N\big[i(t+\Delta t)-i(t)\big]=P_{r} Ne(t)\Delta t-P_{e} Ni(t)\Delta t-kP_{f}(1-P_{e})Ns(t)i(t)\Delta t-k(1-P_{f})(1-P_{e})Ns(t)i(t)\Delta t,
\end{dmath}
that is
\begin{dmath}\label{i(t)}
\frac{i(t+\Delta t)-i(t)}{\Delta t}=P_{r} e(t)-P_{e} i(t)-kP_{f}(1-P_{e})s(t)i(t)-k(1-P_f)(1-P_{e})s(t)i(t).
\end{dmath}
Taking the limit of $\Delta t \rightarrow 0$ on both sides  simultaneously, we can obtain the mean-field equation of ignorants as 
\begin{dmath}
    \frac{\mathrm{d}i(t)}{\mathrm{d}t}=P_{r} e(t)-P_{e} i(t)-kP_{f}(1-P_{e}) s(t)i(t)\\
    -k(1-P_f)(1-P_{e})s(t)i(t) = P_{r} e(t)-P_{e} i(t)-k(1-P_{e}) s(t)i(t).
\end{dmath}
Using the same method, the mean-field equations of the proposed block propagation can be described as follows:
%\begin{gather}
%    \begin{cases}
%    \frac{di(t)}{dt}=P_{EI} e(t)-P_{IE} i(t)-k(P_{IS}
%    +P_{IU})(1-P_{IE}) s(t)i(t)\vspace{1ex}\\
%    \frac{ds(t)}{dt}=kP_{IS}(1-P_{IE}) s(t)i(t)-\frac{1}{k}(1+s(t))s(t)\vspace{1ex} \\
%    \frac{du(t)}{dt}=kP_{IU}(1-P_{IE})s(t)i(t)-\frac{1}{k}(1+s(t))u(t)\vspace{1ex} \\
%    \frac{dr(t)}{dt}=\frac{1}{k}(1+s(t))s(t)+\frac{1}{k}(1+s(t))u(t)\vspace{1ex} \\
%    \frac{de(t)}{dt}=P_{IE}i(t)-P_{EI} e(t)\vspace{1ex}.
%\end{cases}
%\end{gather}
\begin{align}
    \frac{\mathrm{d}i(t)}{\mathrm{d}t}&=P_{r} e(t)-P_{e} i(t)-k(1-P_{e}) s(t)i(t),\\
    \frac{\mathrm{d}s(t)}{\mathrm{d}t}&=kP_{f}(1-P_{e}) s(t)i(t)-P_{i}\big(1+ks(t)\big)s(t),\\
    \frac{\mathrm{d}u(t)}{\mathrm{d}t}&=k(1-P_f)(1-P_{e})s(t)i(t)-P_{i}u(t),\\
    \frac{\mathrm{d}r(t)}{\mathrm{d}t}&=P_{i}\big(1+ks(t)\big)s(t)+P_iu(t),\\
    \frac{\mathrm{d}e(t)}{\mathrm{d}t}&=P_{e}i(t)-P_{r} e(t).
\end{align}

\begin{figure*}[t]
	\centering
        \captionsetup{font=footnotesize}
	\subfloat[ The evil probability $P_e$ is fixed.]
	{\includegraphics[width=0.3\textwidth]{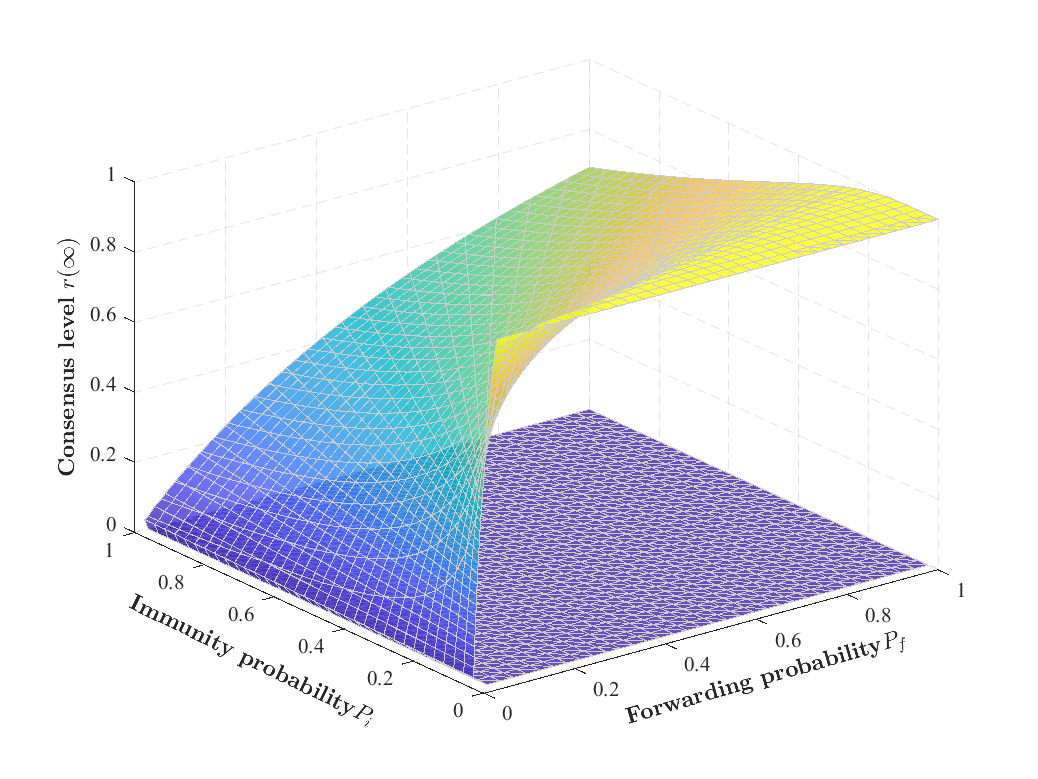}\label{p_e_fixed}}
        \captionsetup{font=footnotesize}
	\subfloat[ The immunity probability $P_i$ is fixed.]
	{\includegraphics[width=0.3\textwidth]{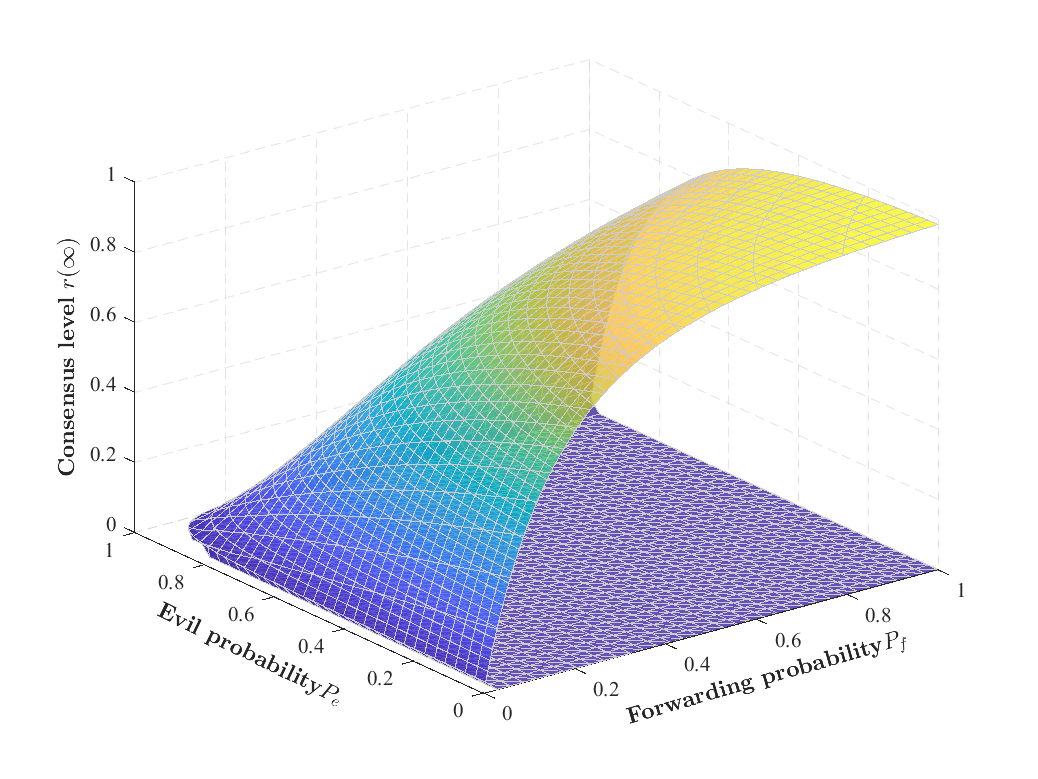}\label{p_i_fixed}}
        \captionsetup{font=footnotesize}
	\subfloat[ The forwarding probability $P_f$ is fixed.]
	{\includegraphics[width=0.3\textwidth]{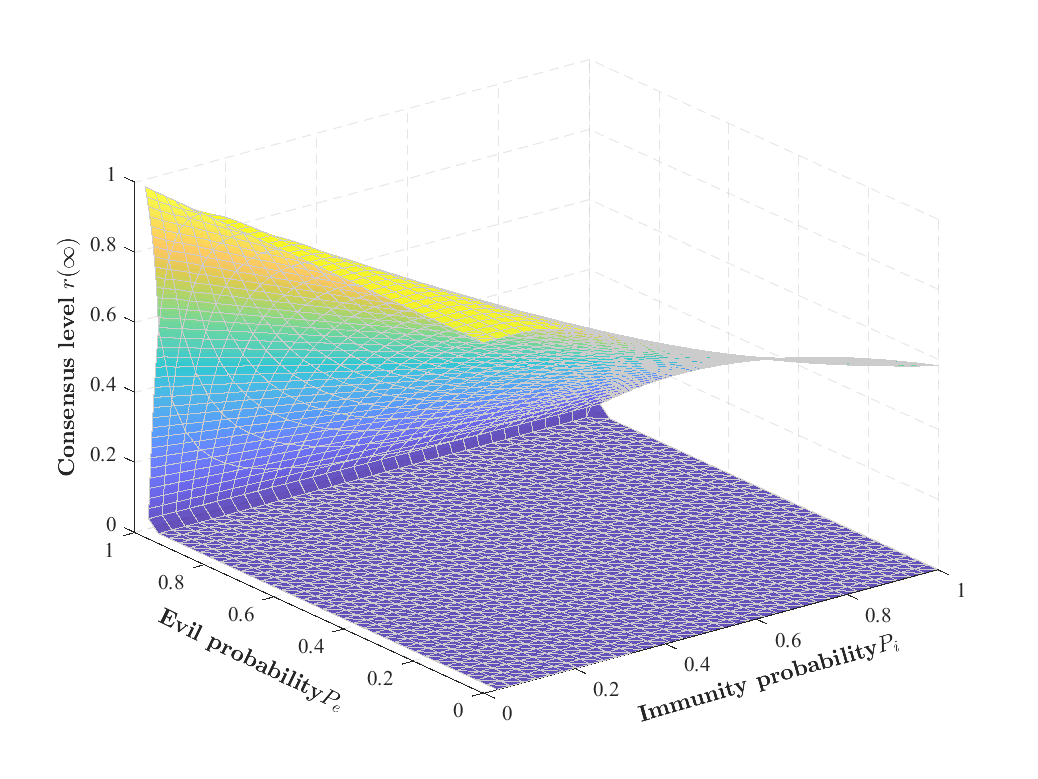}\label{p_f_fixed}}	
        \captionsetup{font=footnotesize}
	\caption{Consensus level $r(\infty)$ variations under the other two probabilities when one probability is fixed.}\label{r_inifty}
\end{figure*}

The \textit{Runge–Kutta} algorithm can be used to solve the above differential equations and analyze the effects on block propagation by important factors\cite{zhao2012sihr}. The computational complexity of the algorithm is $\mathcal{O}(h^5)$, where $h$ is a time step. The initial value of each proportion is
\begin{gather}
   i(t)=\frac{N-1}{N}, s(t)=\frac{1}{N}, u(t)=0, r (t)=0, e(t)=0.
\end{gather}
\subsubsection{Steady-state analysis} In the whole process of block propagation, spreaders facilitate block propagation. The number of spreaders first increases, then decreases, and reaches zero when the block consensus is done. At that time, the miner network reaches an equilibrium state, and the steady state of the network consists of ignorants, evildoers, and refusers. We analyze the final size of block consensus $r(\infty)$, where $r(\infty) = final\left\{r(t)\right\} = \lim_{t\rightarrow\infty}r(t)$\cite{zhao2012sihr,liu2016shir}. $r(\infty)$ can be used to measure the level of consensus. Taking $r(\infty) = 0.8$ as an example, it means that $80\%$ of miners have validated the block in the end. $r(\infty)$ is the solution of the transcendental equation\cite{liu2016shir}. According to the property of the transcendental equation, $r(\infty) = 1-e^{-\sigma r(\infty)}$, where $\sigma = \frac{(1-P_e)P_f}{P_i}+1$.

\begin{theorem}\label{theorem_1}
If $\sigma > 1$, namely $P_e \neq 1$ and $P_f \neq 0$, the consensus level $r(\infty) = 1-e^{-\sigma r(\infty)}$ has two solutions: zero and a nontrivial solution $R$, where $R \in (0,1)$.
\end{theorem}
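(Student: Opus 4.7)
The plan is to study the auxiliary function $f(r) = 1 - e^{-\sigma r} - r$ on the closed interval $[0,1]$, since the solutions of $r(\infty) = 1 - e^{-\sigma r(\infty)}$ on $[0,1]$ are exactly the zeros of $f$. The existence of the trivial solution is immediate: $f(0) = 1 - e^{0} - 0 = 0$, so $r(\infty) = 0$ is always a root irrespective of $\sigma$.

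For the existence of a second root, I would examine $f$ at the right endpoint and near the origin. At $r = 1$, $f(1) = -e^{-\sigma} < 0$. Computing $f'(r) = \sigma e^{-\sigma r} - 1$, I get $f'(0) = \sigma - 1$, which is strictly positive precisely under the hypothesis $\sigma > 1$ (equivalently $P_e \ne 1$ and $P_f \ne 0$, since $(1-P_e)P_f/P_i > 0$ in that case). Therefore $f$ is increasing in a right neighborhood of $0$, so there is some small $r_0 \in (0,1)$ with $f(r_0) > 0$. Applying the intermediate value theorem to $f$ on $[r_0, 1]$ then delivers a root $R \in (r_0, 1) \subset (0,1)$.

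The remaining task is uniqueness of the nontrivial root. I would argue this through concavity: $f''(r) = -\sigma^2 e^{-\sigma r} < 0$ for all $r$, so $f$ is strictly concave on $[0,1]$. A strictly concave function can vanish at no more than two points of an interval (otherwise the secant between the outer two zeros would lie strictly below the graph at the middle zero, contradicting $f$ being equal to zero there). Since $r = 0$ is one zero, there can be at most one additional zero, which must coincide with the $R$ produced above.

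The only mildly delicate step is packaging the concavity argument cleanly; everything else is a direct sign check. A minor sanity check I would add is that $R$ is also bounded away from $1$: since $f(1) < 0$ and $f$ is continuous, any root lies strictly below $1$, which together with $R > r_0 > 0$ gives $R \in (0,1)$ as claimed. I would also briefly note the boundary case $\sigma = 1$ (where $f'(0) = 0$ and $f$ is strictly concave with $f(0)=0$, so $f(r) < 0$ for $r > 0$ and no nontrivial root exists) to emphasize why the strict inequality $\sigma > 1$ is needed.
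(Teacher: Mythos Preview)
Your proof is correct and follows essentially the same approach as the paper: both introduce the auxiliary function (yours is the negative of the paper's $f(x)=x+e^{-\sigma x}-1$, hence concave rather than convex), check that $f(0)=0$, use the sign of $f'(0)$ under $\sigma>1$ together with the sign of $f(1)$, and invoke the intermediate value theorem. Your write-up is in fact slightly more complete, since you explicitly argue uniqueness of the nontrivial root via strict concavity and discuss the boundary case $\sigma=1$, whereas the paper's proof only establishes existence.
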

\begin{proof}
It is obvious that $r(\infty) = 0$ is a solution of $r(\infty) = 1-e^{-\sigma r(\infty)}$. We construct a function $f(x) = x + e^{-\sigma x} - 1$. By taking the second order derivative of $f(x)$, we can get $f^{\prime \prime}(x) = \sigma^2e^{-\sigma x}>0$. Thus, $f(x)$ is a convex function. Since $f^\prime(0) = 1 - \sigma e^{-\sigma x}|_{x = 0} = 1 - \sigma < 0$ and $f(1) = e^{-\sigma} > 0$, the function $f(x)$ must exist a nontrivial solution $R$, where $0 < R < 1$. Therefore, the theorem is proved.
\end{proof}

\begin{theorem}
If $\sigma > 1$, namely $P_e \neq 1$ and $P_f \neq 0$, there is no consensus threshold in the block propagation model.
\end{theorem}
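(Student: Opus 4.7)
The plan is to show that under the condition $\sigma > 1$ (equivalently $P_e \neq 1$ and $P_f \neq 0$), the system admits a strictly positive final consensus level $r(\infty) = R > 0$, regardless of how close $\sigma$ is to $1$. This would immediately preclude the existence of a threshold in the usual epidemiological sense, namely a critical parameter value inside the admissible region below which the consensus level collapses to zero and above which it becomes positive.

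First, I would invoke Theorem \ref{theorem_1} directly: it already establishes that whenever $\sigma > 1$, the transcendental equation $r(\infty) = 1 - e^{-\sigma r(\infty)}$ admits, in addition to the trivial root $0$, a nontrivial root $R \in (0,1)$. The physically realized solution of the propagation dynamics is the nontrivial one, because the initial condition $s(0) = 1/N > 0$ prevents the trajectory from remaining at the trivial equilibrium; the mean-field equations then drive $r(t)$ monotonically up toward $R$.

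Second, I would translate the parameter region $\sigma > 1$ back to the microscopic probabilities. Since $\sigma = \frac{(1-P_e)P_f}{P_i} + 1$ with $P_e, P_f \in [0,1]$ and $P_i \in (0,1]$, the inequality $\sigma > 1$ is equivalent to $(1-P_e)P_f > 0$, i.e.\ $P_e \neq 1$ and $P_f \neq 0$. Thus the hypothesis covers essentially the entire non-degenerate parameter region, and the only excluded cases are the trivial ones in which either no one ever forwards ($P_f = 0$) or every miner is malicious ($P_e = 1$).

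Third, to formalize ``no threshold,'' I would argue by continuity rather than contradiction. By implicit differentiation of $R = 1 - e^{-\sigma R}$, one obtains $\frac{dR}{d\sigma} = \frac{R e^{-\sigma R}}{1 - \sigma e^{-\sigma R}}$, which is strictly positive on $\sigma > 1$, so $R(\sigma)$ is continuous and strictly increasing, with $R(\sigma) \to 0^+$ as $\sigma \to 1^+$ and $R(\sigma) \to 1^-$ as $\sigma \to \infty$. Hence consensus emerges smoothly from zero exactly at the boundary $\sigma = 1$ of the admissible region, and there is no interior critical value $\sigma^* > 1$ at which consensus abruptly switches on. The main subtlety I anticipate is not computational but definitional: one must be careful to distinguish the boundary $\sigma = 1$ (the degenerate edge of the parameter space, where the hypothesis itself fails) from a genuine interior threshold, and to make explicit that ``no threshold'' refers to the latter. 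Once this is stated, the argument reduces to combining Theorem \ref{theorem_1} with the monotonicity of $R(\sigma)$.
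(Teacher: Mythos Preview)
Your argument is correct, but it proceeds along a somewhat different line from the paper's own proof. The paper works geometrically with the auxiliary function $g(x)=1-e^{-\sigma x}$: it observes that $g$ is concave, computes $g'(0)=\sigma>1$, and then argues that because $\sigma$ ranges over all values larger than $1$ as $(P_e,P_f,P_i)$ vary over the admissible region, the resulting consensus level $r(\infty)$ has no upper ceiling---hence no threshold. In effect the paper re-derives the existence of the nontrivial fixed point via concavity (essentially recycling the idea behind Theorem~\ref{theorem_1} with $g$ in place of $f$) and then appeals to the unbounded range of $\sigma$.

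You, by contrast, cite Theorem~\ref{theorem_1} outright for existence, add the dynamical observation that the nontrivial root is the one actually selected by the mean-field trajectory (a point the paper leaves implicit), and then use implicit differentiation to show that $R(\sigma)$ is strictly increasing and continuous on $(1,\infty)$, vanishing only at the degenerate boundary $\sigma=1$. This buys you a sharper statement of what ``no threshold'' means: there is no interior critical $\sigma^*>1$ where consensus switches on or off. Your route is a bit longer but more explicit about the definition of threshold and about why the trivial root $r(\infty)=0$ is not the physical one; the paper's route is terser and leans on the geometric picture of a concave curve with slope exceeding $1$ at the origin.
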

\begin{proof}
Based on \textbf{Theorem \ref{theorem_1}}, $r(\infty) = 0$ is a solution of $r(\infty) = 1-e^{-\sigma r(\infty)}$. Building a function $g(x) = 1 - e^{-\sigma x}$ and taking the second order derivative of $g(x)$, we can get $g^{\prime \prime}(x) = -\sigma^2e^{-\sigma x} < 0$. Thus, $g(x)$ is a concave function. Since $g^\prime(0) = \sigma e^{-\sigma x}|_{x = 0} = \sigma > 1$ and $g^\prime(0)$ can be any value for all values of probabilities $P_e$, $P_f$, and $P_i$, $r(\infty)$ has no maximum. Therefore, the theorem is proved.  
\end{proof}

\begin{theorem}
Given fixed $P_e$ and $P_i$, $r(\infty)$ increases as $P_f$ increases. Given fixed $P_e$ and $P_f$, $r(\infty)$ decreases as $P_i$ increases. Similarly, given fixed $P_i$ and $P_f$, $r(\infty)$ decreases as $P_e$ increases.
\end{theorem}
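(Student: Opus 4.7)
The plan is to reduce all three monotonicity claims to a single statement: the nontrivial solution $R$ of $R = 1 - e^{-\sigma R}$ is strictly increasing in the composite parameter $\sigma = \frac{(1-P_e)P_f}{P_i} + 1$, and then track how $\sigma$ responds to each of $P_e$, $P_i$, $P_f$ individually. Since $P_e \neq 1$ and $P_f \neq 0$ is in force (so the hypothesis of Theorem~\ref{theorem_1} applies and the nontrivial $R \in (0,1)$ exists), partial differentiation of $\sigma$ gives $\partial\sigma/\partial P_f = (1-P_e)/P_i > 0$, $\partial\sigma/\partial P_i = -(1-P_e)P_f/P_i^2 < 0$, and $\partial\sigma/\partial P_e = -P_f/P_i < 0$. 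Once monotonicity of $R$ in $\sigma$ is established, the three conclusions follow immediately by the chain rule.

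To prove $R$ is strictly increasing in $\sigma$, I would apply the implicit function theorem to $F(R,\sigma) := R - 1 + e^{-\sigma R}$. Routine differentiation yields
\begin{equation}
\frac{\mathrm{d}R}{\mathrm{d}\sigma} \;=\; -\,\frac{\partial F/\partial \sigma}{\partial F/\partial R} \;=\; \frac{R\,e^{-\sigma R}}{1 - \sigma e^{-\sigma R}}.
\end{equation}
The numerator is positive because $R \in (0,1)$, so the whole argument reduces to showing that the denominator is strictly positive at the nontrivial root.

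The key step, and the one I expect to be the main obstacle, is verifying that $1 - \sigma e^{-\sigma R} > 0$ at the nontrivial solution $R$. Here I would reuse the geometric picture already set up in the proof of Theorem~\ref{theorem_1}: the concave function $g(x) = 1 - e^{-\sigma x}$ satisfies $g(0) = 0$, $g'(0) = \sigma > 1$, and $g(1) < 1$, so by concavity $g(x) - x$ is strictly positive on $(0,R)$ and strictly negative on $(R,1)$. This forces $g'(R) \le 1$; strict inequality follows because if $g'(R) = 1$ then by strict concavity $g(x) - x$ would attain its maximum at $R$, contradicting $g(R) - R = 0$ while $g(x) - x > 0$ for $x \in (0,R)$. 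Hence $\sigma e^{-\sigma R} = g'(R) < 1$, and the denominator in the implicit-function formula is positive, giving $\mathrm{d}R/\mathrm{d}\sigma > 0$.

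Putting the pieces together, since $r(\infty) = R$ depends on $(P_e,P_i,P_f)$ only through $\sigma$, and $\sigma$ is strictly increasing in $P_f$ and strictly decreasing in $P_i$ and $P_e$, the three monotonicity assertions of the theorem follow directly. I would close by remarking that the argument is robust as long as the nontrivial branch exists, which is exactly the hypothesis $P_e \neq 1$, $P_f \neq 0$.
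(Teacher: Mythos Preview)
Your proposal is correct and rests on the same core device as the paper---implicit differentiation of $r(\infty)=1-e^{-\sigma r(\infty)}$. The paper carries this out three separate times, once for each of $P_f,P_i,P_e$, obtaining exactly your expression for $\mathrm{d}R/\mathrm{d}\sigma$ multiplied by the appropriate $\partial\sigma/\partial(\cdot)$; you instead factor through $\sigma$ once and then read off the signs $\partial\sigma/\partial P_f>0$, $\partial\sigma/\partial P_i<0$, $\partial\sigma/\partial P_e<0$, which is a cleaner decomposition but not a different idea. The only substantive divergence is in justifying the denominator sign $1-\sigma e^{-\sigma R}>0$: the paper introduces $h(x)=1-\sigma e^{-\sigma x}$, notes that it is increasing, and asserts $h(0)>0$; your route via the strict concavity of $g(x)=1-e^{-\sigma x}$ and the crossing structure from Theorem~\ref{theorem_1} (forcing $g'(R)<1$) is more careful, since in fact $h(0)=1-\sigma<0$ when $\sigma>1$, so monotonicity of $h$ alone does not immediately yield the needed positivity at $x=R$ without further information about where $R$ lies. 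Your concavity argument fills exactly that gap and is otherwise equivalent in spirit.
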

\begin{proof}
As shown above, the consensus level of the miner network is given by
\begin{equation}
    \begin{aligned}
        r(\infty) = 1 - e^{-\Big(\frac{(1-P_e)P_f}{P_i}+1\Big)r(\infty)}.
    \end{aligned}
\end{equation}

When $P_e$ and $P_i$ are fixed, we take the derivative of $r(\infty)$ with respect to $P_f$ and get

\begin{equation}\label{r_f}
    \begin{aligned}
        r^\prime(\infty) = \frac{\frac{1-P_e}{P_i}e^{-\Big(\frac{(1-P_e)P_f}{P_i}+1\Big)r(\infty)}r(\infty)}{1-\Big(\frac{(1-P_e)P_f}{P_i} + 1\Big)e^{-\Big(\frac{(1-P_e)P_f}{P_i}+1\Big)r(\infty)}}.
    \end{aligned}
\end{equation}
Note that the numerator of (\ref{r_f}) is greater than $0$. We construct a function $h(x) = 1 - \epsilon e^{-\epsilon x}$. Taking the derivative of $h(x)$, we can get $h^\prime(x) = \epsilon^2e^{-\epsilon x} > 0$. Thus, $h(x)$ is a monotonically increasing function. Since $h(0) = 1 > 0$, we have $h(x) > 0, x \in (0,1)$, and $r^\prime(\infty) > 0$. Therefore, the first part of this theorem is proved.

When $P_e$ and $P_f$ are fixed, we take the derivative of $r(\infty)$ with respect to $P_i$ and get
\begin{equation}\label{r_i}
    \begin{aligned}
        r^\prime(\infty) = \frac{-\frac{(1-P_e)P_f}{P_i^2}e^{-\Big(\frac{(1-P_e)P_f}{P_i}+1\Big)r(\infty)}r(\infty)}{1-\Big(\frac{(1-P_e)P_f}{P_i} + 1\Big)e^{-\Big(\frac{(1-P_e)P_f}{P_i}+1\Big)r(\infty)}}.
    \end{aligned}
\end{equation}
From the first part of this theorem, we know that the denominator of (\ref{r_i}) is greater than $0$. Since the numerator of (\ref{r_i}) is less than $0$, $r^\prime(\infty) < 0$. Therefore, the second part of this theorem is proved.

When $P_i$ and $P_f$ are fixed, we take the derivative of $r(\infty)$ with respect to $P_e$ and get
\begin{equation}\label{r_e}
    \begin{aligned}
        r^\prime(\infty) = \frac{-\frac{P_f}{P_i}e^{-\Big(\frac{(1-P_e)P_f}{P_i}+1\Big)r(\infty)}r(\infty)}{1-\Big(\frac{(1-P_e)P_f}{P_i} + 1\Big)e^{-\Big(\frac{(1-P_e)P_f}{P_i}+1\Big)r(\infty)}}.
    \end{aligned}
\end{equation}
Since the numerator of (\ref{r_e}) is less than $0$ and the denominator of (\ref{r_e}) is greater than $0$, we have $r^\prime(\infty) < 0$. Therefore, the third part of this theorem is proved.
\end{proof}

The above analysis indicates that the consensus level of the miner network $r(\infty)$ is a function of the forwarding probability $P_f$, the immunity probability $P_i$, and the evil probability $P_e$. Figure \ref{r_inifty} shows variations of the consensus level $r(\infty)$ as changes in the other two probabilities when one probability is fixed. From Fig. \ref{p_e_fixed}, we can observe that when the evil probability $P_e$ is fixed, the consensus level $r(\infty)$ has great changes, and almost any values from $0$ to $1$ can be taken. Given a fixed $P_e$, $r(\infty)$ increases as $P_i$ decreases, but it looks like that $r(\infty)$ first decreases and then increases as $P_i$ decreases. The reason for having this wrong vision is that the surface of Fig. \ref{p_e_fixed} is distorted. Considering that propagating a block causes the energy cost of miners, incentives can be used to encourage miners to propagate the block actively.

\begin{table*}[t]\small\label{Matrix}
    \renewcommand\arraystretch{1.1} %调整行间距
    \centering
    \captionsetup{font=footnotesize}
    \caption{The Evolutionary Game Payoff Matrix for Block Propagators and Block Receivers in Public Blockchains.}
    \begin{tabular}{c|cc}
    \hline
    \diagbox{Propagator strategy}{Receiver strategy} & Forwarding $y(t)$ & Not forwarding $(1-y(t))$  \\
    \hline 
    Forwarding $x(t)$ & $(\Delta I+\Delta U+\Delta P,\Delta I+\Delta U+\Delta P)$ & $(\Delta U + \Delta P -\varepsilon R,\Delta P)$ \\
    Not forwarding $(1-x(t))$ & $-$ & $(\Delta P,0)$\\
    \hline
    \end{tabular}\label{Matrix}
\end{table*}

\subsection{Incentive Mechanism based on Evolutionary Game for Block Propagation}
%change name

{Since the wireless IoT network has the characteristics of large coverage and heterogeneity\cite{wang2023connectivity, 9454291}, miners find it difficult to choose the best strategy for block propagation in the complex environment to maximize their benefits. Hence, miners often make near-optimal decisions based on local information they have. Evolutionary games are time-varying decisions that consider dynamic scenarios with time-varying parameters, close to the real situation\cite{weibull1997evolutionary}. Based on the evolutionary game, miners can only consider limited information in the block propagation process, and the behaviors of miners will evolve to the final stable state in the process of continuous trials and errors\cite{4607241}.} Based on the Evolutionary Stable Strategy (ESS)\cite{weibull1997evolutionary}, we can formulate reasonable optimization strategies to reduce the number of meaningless block transmissions and improve block propagation efficiency, thereby decreasing the minimum value of average AoBI.

It is worth mentioning that block propagation efficiency has a negative correlation with $\mathbb{E}[T_{cm}]$. To be specific, block propagation efficiency is the effective propagation rate of the miner network. When communication time between miners is fixed, the bigger $\mathbb{E}[T_{cm}]$, the larger the number of meaningless block transmissions, making block propagation efficiency lower. Therefore, the lower the minimum value of average AoBI, the higher the block propagation efficiency. 

To improve block propagation efficiency, we propose an Incentive Mechanism based on the evolutionary game for Block Propagation (called BPIM)\cite{wenoptimal}. We study changes in the forwarding probability under the role of the incentive mechanism and find the optimal strategy combination for miners, which can optimize the block propagation mechanism, thereby better achieving the whole network consensus. Considering that all miners are bounded rational, block propagation decisions simultaneously move in the game where one party takes an action without knowing the strategy the other party is taking, namely when deciding on actions, it is inferred that other parties will also act rationally.

\subsubsection{Payoff matrix of the evolutionary game}
Miners can be essentially divided into two groups that are block propagators (i.e., spreaders) and block receivers (i.e., ignorants, spreaders, and evildoers)\cite{ferdous2021survey}. Note that their characteristics are consistent, which indicates that a miner can be either a block propagator or a block receiver.
 
%We define $P$ as block validation reward, $Q$ as block validation cost, $I$ as block propagation reward, $M$ as block propagation cost, and $\Delta I \in \mathbb{R}^+$ as extra block propagation reward for both block propagators and block receivers forwarding the block. Since block validation is executed before block propagation, we assume that $P$ is greater than $Q$. To facilitate research, we define $\Delta P = (P - Q) \in \mathbb{R}^+$ as the basic validation benefit, and $\Delta U = (I - M) \in \mathbb{R}$ as the basic propagation benefit. Besides, we define $R$ as basic punishment risk, and $\varepsilon$ as potential risk coefficient, $\varepsilon \in [0,1]$. $\varepsilon R$ is the punishment risk that spreaders waste bandwidth resources by forwarding the block to miners which do not forward the block, i.e., refusers, unspreaders, and evildoers.

For block validation, the reward and cost are defined as $P$ and $Q$, respectively. For block propagation, the reward and cost are defined as $I$ and $M$, respectively. Moreover, we define $\Delta I \in \mathbb{R}^+$ as an extra block propagation reward for both block propagators and receivers forwarding the block. Since block validation is executed before block propagation, we consider $P>Q$. To facilitate research, we define $\Delta P = (P - Q) \in \mathbb{R}^+$ as the basic validation reward and $\Delta U = (I - M) \in \mathbb{R}$ as the basic propagation reward. Besides, we define $R\in \mathbb{R}^+$ as the punishment risk for the spreaders that forward the block to evildoers and $\varepsilon > 0$ as the unit cost for the punishment risk. Thus, $\varepsilon R$ is the cost that spreaders forward the block to evildoers. Then, we establish an evolutionary game payoff matrix for block propagators and receivers in public blockchains, where the action strategies of miners are whether to forward the new block, as shown in Table \ref{Matrix}. We analyze the cases of the evolutionary game as follows:

\begin{itemize}
    \item When both block propagators and receivers forward the block, both of them will not only receive the block propagation reward $I$ but also the extra block propagation reward $\Delta I$, in which the revenue functions of block propagators are $(\Delta I+\Delta U + \Delta P)$ and the revenue functions of block receivers are $(\Delta I+\Delta U + \Delta P)$.
    \item When block propagators forward the block, but block receivers do not forward the block, in this case, the block receivers may be evildoers. Thus, the block propagators need to undertake the cost $(\varepsilon R)$, in which the revenue functions of block propagators are $(\Delta U + \Delta P - \varepsilon R)$ and the revenue functions of block receivers are $\Delta P$.
    \item When block propagators do not forward the block, but block receivers forward the block, the logic does not hold. Therefore, this case does not exist.
    \item When both block propagators and receivers do not forward the block, the block propagators only obtain the basic validation reward. Thus, the revenue functions of block propagators are $\Delta P$ and the revenue functions of block receivers are $0$.
\end{itemize}

We define that the forwarding probability of block propagators and block receivers are $x(t)$ and $y(t)$, respectively, where $x(t)$ and $y(t)$ are consistent with the forwarding probability $P_{f}$ of the block propagation model for public blockchains with respect to time $t$. Based on the payoff matrix for block propagators and block receivers, we can obtain the revenue functions of block propagators and receivers.

\subsubsection{Revenue functions of block participants}
For block propagators, the expected revenue of forwarding the block $G_{1Y}$ \cite{4607241} is given by
\begin{dmath}\label{G_1Y}
     G_{1Y}=y(t)(\Delta I+\Delta U + \Delta P)+(1-y(t))(\Delta U + \Delta P - \varepsilon R) =y(t)\Delta I + \Delta U + \Delta P - (1-y(t))\varepsilon R.
\end{dmath}
Similarly, the expected revenue of not forwarding the block $G_{1N}$ is given by 
\begin{dmath}
    G_{1N}=(1-y(t))\Delta P.
\end{dmath}
Thus, the group average revenue of block propagators $G_1$ \cite{4607241} can be expressed as
\begin{dmath}\label{G_1}
     G_1 = x(t)G_{1Y} + (1-x(t))G_{1N}=x(t)y(t)(\Delta I + \Delta P + \varepsilon R) + x(t)(\Delta U - \varepsilon R) + (1-y(t))\Delta P.
\end{dmath}
For block receivers, the expected revenue of forwarding the block $G_{2Y}$ is given by
\begin{gather}\label{G_2Y}
    G_{2Y}=x(t)(\Delta I + \Delta U + \Delta P),
\end{gather}
and the expected revenue of not forwarding the block $G_{2N}$ is given by
\begin{gather}
    G_{2N}=x(t)\Delta P.
\end{gather}
Thus, the group average revenue of block receivers $G_2$ can be expressed as
\begin{gather}\label{G_2}
\begin{split}
    G_{2} &= y(t)G_{2Y}+(1-y(t))G_{2N}\\&=x(t)y(t)(\Delta I + \Delta U) + x(t)\Delta P.
\end{split}
\end{gather}

\subsubsection{Replicator dynamics for the forwarding probability}
Based on (\ref{G_1Y}) and (\ref{G_1}), the replicator dynamic for the forwarding probability of block propagators can be expressed by the following system of Ordinary Differential Equations (ODEs)\cite{liu2018evolutionary}:
\begin{gather}\label{H(x)}
\begin{split}
    H(x(t))&=\dot{x}(t)=x(t)(G_{1Y}-G_{1})\\
    &=x(t)(1-x(t))[y(t)(\Delta I+\Delta P +\varepsilon R) + \Delta U -\varepsilon R].
\end{split}
\end{gather}
If $y(t)\equiv\frac{-\Delta U + \varepsilon R}{\Delta I + \Delta P + \varepsilon R}$, then $H(x(t))\equiv 0$, indicating that the forward probability of block propagators does not change. If $y(t)\ne \frac{-\Delta U + \varepsilon R}{\Delta I + \Delta P + \varepsilon R}$, let $H(x(t))\equiv0$, then $x(t)\equiv0$ and $x(t)\equiv1$ are possible equilibrium values of $x(t)$, and the derivative of $H(x(t))$ with respect to $x(t)$ is given by
    \begin{dmath}
        \frac{\mathrm{d}H(x(t))}{\mathrm{d}x(t)}=(1-2x(t))[y(t)(\Delta I+\Delta P + \varepsilon R)+ \Delta U -\varepsilon R].
    \end{dmath}
    
To satisfy the condition of ESS, i.e.,  $\frac{\mathrm{d}H(x(t))}{\mathrm{d}x(t)}<0$, we compare the size of $y(t)$ and $\frac{-\Delta U + \varepsilon R}{\Delta I + \Delta P + \varepsilon R}$, and we can obtain the following theorems:
\begin{theorem}
When $y(t) > \frac{-\Delta U + \varepsilon R}{\Delta I + \Delta P + \varepsilon R}$, if $\frac{\mathrm{d}H(x(t))}{\mathrm{d}x(t)}<0$, then $x(t)\equiv1$ is the only possible equilibrium value of the forwarding probability of block propagators. When $(\Delta I + \Delta P + \Delta U)\gg 0$, then $\frac{-\Delta U + \varepsilon R}{\Delta I + \Delta P + \varepsilon R}\approx 0$. Thus, $y(t)>\frac{-\Delta U + \varepsilon R}{\Delta I + \Delta P + \varepsilon R}$ is always true. Similarly, when $\Delta U > \varepsilon R$, then $\frac{-\Delta U + \varepsilon R}{\Delta I + \Delta P + \varepsilon R} < 0$. Therefore, $y(t)>\frac{-\Delta U + \varepsilon R}{\Delta I + \Delta P + \varepsilon R}$ is always true.
\end{theorem}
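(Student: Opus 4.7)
The plan is to analyze the replicator dynamic $H(x(t)) = x(t)(1-x(t))\bigl[y(t)(\Delta I + \Delta P + \varepsilon R) + \Delta U - \varepsilon R\bigr]$ through the lens of the standard ESS criterion, namely that a fixed point $x^{*}$ is evolutionarily stable if and only if $H(x^{*}) = 0$ and $H'(x^{*}) < 0$. Since the two candidate equilibria identified before the theorem are $x(t) \equiv 0$ and $x(t) \equiv 1$, the goal reduces to showing that, under the stated condition on $y(t)$, only $x(t) \equiv 1$ passes the stability test.

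First, I would rewrite the bracketed factor as $\Psi(y(t)) := y(t)(\Delta I + \Delta P + \varepsilon R) + \Delta U - \varepsilon R$ and observe that the hypothesis $y(t) > \frac{-\Delta U + \varepsilon R}{\Delta I + \Delta P + \varepsilon R}$ is exactly the statement $\Psi(y(t)) > 0$ (using $\Delta I + \Delta P + \varepsilon R > 0$, which follows from the positivity of these rewards and costs). Substituting into the expression for $H'(x(t))$ given just above the theorem, $H'(x(t)) = (1 - 2x(t))\,\Psi(y(t))$, I would evaluate at the two equilibria: $H'(0) = \Psi(y(t)) > 0$, showing $x = 0$ is unstable, and $H'(1) = -\Psi(y(t)) < 0$, showing $x = 1$ satisfies the ESS stability condition. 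This establishes the first sentence of the theorem.

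For the two sufficient conditions in the second part, I would simply analyze the threshold $T := \frac{-\Delta U + \varepsilon R}{\Delta I + \Delta P + \varepsilon R}$ as a real-valued quantity. In the regime $(\Delta I + \Delta P + \Delta U) \gg 0$, the denominator of $T$ is large while the numerator $\varepsilon R - \Delta U$ stays bounded (or becomes strongly negative), so $T \to 0$; since $y(t) \in [0,1]$ and in practice $y(t)$ is bounded away from $0$, the inequality $y(t) > T$ is automatic. In the regime $\Delta U > \varepsilon R$, the numerator is strictly negative while the denominator is positive, so $T < 0 \le y(t)$ trivially. Each of these is a two-line algebraic argument about the sign or magnitude of $T$, which I would present in a display to keep the logic transparent.

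The main obstacle is nothing delicate analytically — the ESS calculation is textbook replicator dynamics — but rather presentational: making clear that the author's phrase ``when $(\Delta I + \Delta P + \Delta U) \gg 0$, then $T \approx 0$'' is being used as an informal asymptotic sufficient condition rather than a rigorous limit, so I would state it as ``for all sufficiently large $\Delta I + \Delta P + \Delta U$ (with $\varepsilon R$ fixed), $T$ is as small as desired, in particular smaller than any positive lower bound on $y(t)$.'' With that clarification, both sufficient conditions fall out immediately from the first half of the proof.
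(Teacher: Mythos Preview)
Your proposal is correct and follows essentially the same approach as the paper: the paper does not provide a separate proof environment for this theorem, but the argument is embedded in the text immediately preceding it, namely computing $\frac{\mathrm{d}H(x(t))}{\mathrm{d}x(t)}=(1-2x(t))[y(t)(\Delta I+\Delta P+\varepsilon R)+\Delta U-\varepsilon R]$ and then comparing $y(t)$ against the threshold to determine which of $x=0$ or $x=1$ satisfies the ESS condition $H'(x^{*})<0$. Your explicit evaluation at both endpoints, $H'(0)=\Psi(y(t))>0$ and $H'(1)=-\Psi(y(t))<0$, and your clarification that the ``$\gg 0$'' clause is an informal asymptotic statement, simply make the paper's reasoning more precise without departing from it.
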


\begin{theorem}
When $y(t)<\frac{-\Delta U + \varepsilon R}{\Delta I + \Delta P + \varepsilon R}$, if $\frac{\mathrm{d}H(x(t))}{\mathrm{d}x(t)}<0$, then $x(t)\equiv0$ is the only possible equilibrium value of the forwarding probability of block propagators. When $(\Delta I + \Delta P + \Delta U) < 0$, we have $\frac{-\Delta U + \varepsilon R}{\Delta I + \Delta P + \varepsilon R} > 1$. Therefore, $y(t) < \frac{-\Delta U + \varepsilon R}{\Delta I + \Delta P + \varepsilon R}$ is always true.
\end{theorem}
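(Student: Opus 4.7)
The plan is to exploit the factored structure of the replicator dynamic in equation (\ref{H(x)}): writing $H(x(t)) = x(t)\bigl(1-x(t)\bigr)\,\Phi\bigl(y(t)\bigr)$ where $\Phi(y) = y(\Delta I + \Delta P + \varepsilon R) + \Delta U - \varepsilon R$, the zeros of $H$ (for generic $y(t)$) are exactly $x(t) = 0$ and $x(t) = 1$. So both parts of the theorem reduce to (i) a stability check via the sign of $dH/dx$ at these two candidates, followed by (ii) a purely algebraic comparison of the threshold $\frac{-\Delta U + \varepsilon R}{\Delta I + \Delta P + \varepsilon R}$ with $1$.

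For the first assertion, I would compute $\frac{dH}{dx}$ at the two candidate equilibria. At $x = 0$ it reduces to $\Phi(y(t))$, and at $x = 1$ it reduces to $-\Phi(y(t))$. Under the hypothesis $y(t) < \frac{-\Delta U + \varepsilon R}{\Delta I + \Delta P + \varepsilon R}$ (together with the tacit positivity of the denominator $\Delta I + \Delta P + \varepsilon R$, which holds because $\Delta I, \Delta P, \varepsilon R > 0$ by the model definitions), we have $\Phi(y(t)) < 0$. Hence $\frac{dH}{dx}\bigl|_{x=0} < 0$ while $\frac{dH}{dx}\bigl|_{x=1} > 0$, so $x(t) \equiv 0$ is the only candidate that satisfies the ESS condition $\frac{dH(x(t))}{dx(t)} < 0$, exactly as asserted. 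This mirrors the argument in the previous theorem but with the inequality flipped.

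For the second assertion, I would simply compare numerator and denominator. Since $\Delta I + \Delta P + \varepsilon R > 0$, the inequality $\frac{-\Delta U + \varepsilon R}{\Delta I + \Delta P + \varepsilon R} > 1$ is equivalent to $-\Delta U + \varepsilon R > \Delta I + \Delta P + \varepsilon R$, i.e.\ $-(\Delta I + \Delta P + \Delta U) > 0$, which is precisely the hypothesis. Because $y(t)$ is a probability and therefore confined to $[0,1]$, the strict inequality $y(t) < \frac{-\Delta U + \varepsilon R}{\Delta I + \Delta P + \varepsilon R}$ then holds unconditionally, closing the proof.

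The only subtle point, and the place where I would be most careful, is the implicit assumption that $\Delta I + \Delta P + \varepsilon R > 0$: both the stability analysis and the direction of the inequality in the algebraic step rely on it, since dividing by a negative quantity would flip the relevant inequalities. This positivity is a direct consequence of the parameter conventions laid down before Table \ref{Matrix} (namely $\Delta I \in \mathbb{R}^+$, $\Delta P \in \mathbb{R}^+$, and $\varepsilon R > 0$), so I would state it explicitly at the start of the proof to make the argument self-contained. No further delicate estimates are required; the remainder is straightforward sign bookkeeping.
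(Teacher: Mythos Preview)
Your proposal is correct and follows essentially the same approach as the paper: the paper computes $\frac{\mathrm{d}H(x(t))}{\mathrm{d}x(t)}=(1-2x(t))[y(t)(\Delta I+\Delta P+\varepsilon R)+\Delta U-\varepsilon R]$ and then reads off the stability of $x\equiv 0$ versus $x\equiv 1$ by comparing $y(t)$ to the threshold, which is exactly your sign analysis of $\Phi(y(t))$. Your explicit remark that the argument relies on $\Delta I+\Delta P+\varepsilon R>0$ (guaranteed by the parameter conventions) is a useful clarification that the paper leaves implicit.
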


Similarly, based on (\ref{G_2Y}) and (\ref{G_2}), the replicator dynamic for the forwarding probability of block receivers can be expressed as\cite{liu2018evolutionary}
\begin{gather}\label{H(y)}
 \begin{split}
    H(y(t))&=\dot{y}(t)=y(t)(G_{2Y}-G_{2})\\
    &=x(t)y(t)(1-y(t))(\Delta I + \Delta U).
 \end{split}
\end{gather}
If $(\Delta I + \Delta U) = 0$ or $x(t)\equiv0$, then $H(y(t))\equiv 0$, indicating that the forward probability of block receivers does not change.
If $(\Delta I + \Delta U) \neq 0$ and $x(t) \ne 0$, let $H(y(t))\equiv0$, then $y(t)\equiv0$ and $y(t)\equiv1$ are possible equilibrium values of $y(t)$, and the derivation of $H(y(t))$ with respect to $y(t)$ is
    \begin{gather}
      \begin{split}
        \frac{\mathrm{d}H(y(t))}{\mathrm{d}y(t)}=x(t)(1-2y(t))(\Delta I + \Delta U).
      \end{split}
    \end{gather}
Since $x(t)>0$, the following theorems can be obtained as

\begin{figure*}[t]
	\hrulefill
        \begin{equation}\label{jm}
        \bm{J}=
        \begin{pmatrix}
	    \frac{\partial H(x(t))}{\partial x(t)} & \frac{\partial H(x(t))}{\partial y(t)}\\
            \frac{\partial H(y(t))}{\partial x(t)} & \frac{\partial H(y(t))}{\partial y(t)}
	\end{pmatrix}
        =
        \begin{pmatrix}
            (1-2x(t))[y(t)(\Delta I + \Delta P + \varepsilon R)+\Delta U - \varepsilon R] & x(t)(1-x(t))(\Delta I + \Delta P + \varepsilon R)\\
            y(t)(1-y(t))(\Delta I + \Delta U) & x(t)(\Delta I + \Delta U)(1-2y(t))
        \end{pmatrix}.
        \end{equation}
\end{figure*}

\begin{theorem}
When $(\Delta I + \Delta U) < 0$, if $\frac{\mathrm{d}H(y(t))}{\mathrm{d}y(t)}$, then $y(t)\equiv0$ is the only possible equilibrium value of the forwarding probability of block receivers.
\end{theorem}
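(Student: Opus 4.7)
The plan is to mirror the structure used for the two preceding theorems on $H(y(t))$ and apply the Evolutionary Stable Strategy stability criterion (namely $\frac{\mathrm{d}H(y(t))}{\mathrm{d}y(t)}<0$ at the equilibrium) directly to the replicator dynamic in (\ref{H(y)}). Since the analysis just above the statement has already identified $y(t)\equiv 0$ and $y(t)\equiv 1$ as the only zeros of $H(y(t))=x(t)y(t)(1-y(t))(\Delta I+\Delta U)$ when $x(t)\ne 0$ and $(\Delta I+\Delta U)\ne 0$, the entire proof reduces to a sign check of $\frac{\mathrm{d}H(y(t))}{\mathrm{d}y(t)}=x(t)(1-2y(t))(\Delta I+\Delta U)$ at these two candidates under the hypothesis $(\Delta I+\Delta U)<0$.

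First I would substitute $y(t)=1$ into the derivative, obtaining $-x(t)(\Delta I+\Delta U)$. Because $x(t)>0$ for the game to be nontrivial and $(\Delta I+\Delta U)<0$ by hypothesis, this quantity is strictly positive, so the ESS stability condition fails at $y(t)\equiv 1$ and this candidate must be discarded. Next I would substitute $y(t)=0$, obtaining $x(t)(\Delta I+\Delta U)$, which under the same two sign facts is strictly negative; hence $y(t)\equiv 0$ does satisfy $\frac{\mathrm{d}H(y(t))}{\mathrm{d}y(t)}<0$ and is retained. Combining the two sign checks yields the claim that $y(t)\equiv 0$ is the only possible equilibrium value of the forwarding probability of block receivers.

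There is really no deep obstacle here; the main point of care, and the only place a reader could be tripped up, is the book-keeping around $x(t)$. The case $x(t)\equiv 0$ makes $H(y(t))$ identically zero and leaves $y(t)$ undetermined, so the proof must be carried out under the standing non-degeneracy assumption $x(t)>0$ made just above (\ref{H(y)}) when the equilibria were enumerated. Once this is acknowledged explicitly, the argument is a two-line verification and can be presented in exactly the same style as the two preceding theorems on the replicator dynamic of $y(t)$.
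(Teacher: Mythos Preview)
Your proposal is correct and matches the paper's approach: the paper does not give a separate proof block for this theorem but rather leaves it as an immediate consequence of the derivative formula $\frac{\mathrm{d}H(y(t))}{\mathrm{d}y(t)}=x(t)(1-2y(t))(\Delta I+\Delta U)$ together with the standing assumption $x(t)>0$, and your sign check at the two candidate equilibria $y(t)\equiv0$ and $y(t)\equiv1$ is exactly the implicit computation the paper intends.
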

\begin{theorem}
When $(\Delta I + \Delta U) > 0$, if $\frac{\mathrm{d}H(y(t))}{\mathrm{d}y(t)}$, then $y(t)\equiv1$ is the only possible equilibrium value of the forwarding probability of block receivers.
\end{theorem}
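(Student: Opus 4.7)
The plan is to mirror the argument already outlined for Theorem 4, with the sign of $(\Delta I + \Delta U)$ reversed. Starting from the replicator dynamic $H(y(t)) = x(t)y(t)(1-y(t))(\Delta I + \Delta U)$ in (\ref{H(y)}) and its derivative $\frac{\mathrm{d}H(y(t))}{\mathrm{d}y(t)} = x(t)(1-2y(t))(\Delta I + \Delta U)$ already computed in the text, I would first note that under the standing assumptions $(\Delta I + \Delta U) \neq 0$ and $x(t) > 0$, the only zeros of $H(y(t))$ in $y(t)$ are the two corner equilibria $y(t) \equiv 0$ and $y(t) \equiv 1$, so these are the only candidates that can satisfy the ESS stability condition $\frac{\mathrm{d}H(y(t))}{\mathrm{d}y(t)} < 0$.

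Next, I would evaluate the derivative at each candidate under the hypothesis $(\Delta I + \Delta U) > 0$. At $y(t) = 0$ the derivative reduces to $x(t)(\Delta I + \Delta U)$, which is strictly positive because both factors are positive; hence $y(t) \equiv 0$ fails the ESS test and is rejected as an unstable equilibrium. At $y(t) = 1$ the derivative becomes $-x(t)(\Delta I + \Delta U)$, which is strictly negative; the ESS condition is therefore satisfied, and $y(t) \equiv 1$ is the unique surviving equilibrium value of the forwarding probability of block receivers.

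There is no real obstacle here: the proof is essentially a sign-flip of the argument used for Theorem 4 (the $(\Delta I + \Delta U) < 0$ case) and amounts to a direct substitution. The only subtlety worth flagging is the tacit use of $x(t) > 0$, which was already imposed when deriving (\ref{H(y)}) and was explicitly discussed just before the theorem statement; the degenerate case $x(t) \equiv 0$ is excluded because it renders $H(y(t)) \equiv 0$ and the ESS framework becomes vacuous. Hence no additional technical machinery beyond the evaluation of $\frac{\mathrm{d}H(y(t))}{\mathrm{d}y(t)}$ at the two corner points is required.
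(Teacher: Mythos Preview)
Your proposal is correct and follows exactly the approach implicit in the paper: the paper computes $\frac{\mathrm{d}H(y(t))}{\mathrm{d}y(t)}=x(t)(1-2y(t))(\Delta I+\Delta U)$ just before stating the theorem, and the result then follows by the same sign check at the two corner equilibria that you carry out. The paper offers no separate proof beyond this setup, so your argument is essentially the one the authors intend.
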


Based on the above analyses of game solutions, we can obtain three possible equilibrium points in the evolutionary game, i.e., $(0,0)$, $(1,0)$, and $(1,1)$. Then, we analyze the evolutionary stability of the equilibrium points.

\begin{comment}
\begin{figure}[h]
%\vspace{-0.4cm}
\captionsetup{font = {normal}}
\centerline{\includegraphics[width=0.35\textwidth]{diagram.eps}}
\caption{The evolutionary game coordinate diagram of miners.}\label{evo_game}
\label{fig}
%\vspace{-0.2cm}
\end{figure}
\end{comment}

\subsubsection{Game equilibrium analysis}
In the evolutionary game, the Jacobian matrix of the replicator dynamics can be used to validate the evolutionary stability of equilibrium points\cite{liu2018evolutionary}. Specifically, the equilibrium point is stable if $\mathrm{det}(\bm{J})=\frac{\partial H(x(t))}{\partial x(t)}\frac{\partial H(y(t))}{\partial y(t)}-\frac{\partial H(x(t))}{\partial y(t)}\frac{\partial H(y(t))}{\partial x(t)}>0$ and $\mathrm{tr}(\bm{J})=\frac{\partial H(x(t))}{\partial x(t)}+\frac{\partial H(y(t))}{\partial y(t)}<0$. Based on (\ref{jm}), we analyze the stability of three equilibrium points as follows:
\begin{enumerate} [a)]
    \item For the equilibrium point $(0,0)$, we have $\mathrm{det}(\bm{J})|_{(0,0)}=0$ and $\mathrm{tr}(\bm{J})|_{(0,0)}=\Delta U - \varepsilon R$. Therefore, the equilibrium point $(0,0)$ is a saddle point rather than an ESS point\cite{weibull1997evolutionary}. 
    \item For the equilibrium point $(1,0)$, we have $\mathrm{det}(\bm{J})|_{(1,0)}=0$ and $\mathrm{tr}(\bm{J})|_{(1,0)}=\Delta I + \Delta P + \Delta U$. Therefore, the equilibrium point $(1,0)$ is also a saddle point.
    \item For the equilibrium point $(1,1)$, we have $\mathrm{det}(\bm{J})|_{(1,1)}=(\Delta I + \Delta U)(\Delta I + \Delta P + \Delta U)$ and $\mathrm{tr}(\bm{J})|_{(1,1)}=-2(\Delta I + \Delta U)-\Delta P$. When $(\Delta I + \Delta U) > 0$, we have $\mathrm{det}(\bm{J})|_{(1,1)}>0$ and $\mathrm{tr}(\bm{J})|_{(1,1)}<0$, and the equilibrium point $(1,1)$ is an ESS point\cite{weibull1997evolutionary}.
\end{enumerate}
%初始化参数还要再详细
\begin{algorithm}[t]
		\caption{Evolutionary Game Solutions for Block Propagation in Public Blockchains}\label{algorithm}

            \textbf{Initialization:} Initialize parameters $\Delta I$, $\Delta U$, $\Delta P$, $\varepsilon$, $R$, and forwarding probability vectors $\bm{x}(t)$ and $\bm{y}(t)$.
            
	    \For{$i \in \bm{x}(t)$}
            {
            \For{$j \in \bm{y}(t)$}
            {
                $t = 1$.

                \While {$\bm{x}(t)$ \rm{and} $\bm{y}(t)$ have not converged \rm{\textbf{and}} $t \leq \text{MAX\_COUNT}$}
                {
                    Use $\mathrm{ODE45}$ \cite{senan2007brief} to solve the replicator dynamics (\ref{H(x)}) (\ref{H(y)}). 

                    $t = t + 1$.
                }
                \textbf{end while}
            }
            \textbf{end for}
            }
            \textbf{end for}
	\end{algorithm}

Motivated by the above analysis, we design an algorithm to solve the evolutionary game for block propagation. In \textbf{Algorithm \ref{algorithm}}, we introduce the strategy evolution of $N$ miners for block propagation in public blockchains. When receiving a new block, miners will choose the best strategy based on the current environment (e.g., the amount of allocated bandwidth and the qualities of channels) to maximize their benefits. Note that the computational complexity of \textbf{Algorithm \ref{algorithm}} is $\mathcal{O}(2T/h)$, where $T = \rm{MAX\_COUNT}$ and $h$ is the time step that is automatically set by the $\rm{ODE45}$ function\cite{senan2007brief}. 

%To provide a theoretical basis for improving the block propagation efficiency, we evaluate BPIM performance and analyze block propagation laws in this section. First, compare BPIM performance with traditional routing algorithms that do not consider incentive mechanisms. Second, study the influence of some significant parameters on block propagation and summarize the propagation laws. The key parameter configuration is shown in Table II \cite{Jinbowen}.
\begin{figure*}[t]
	\centering
        \captionsetup{font=footnotesize}
	\subfloat[ $y(t)>\frac{-\Delta U + \varepsilon R}{\Delta I + \Delta P +\varepsilon R}$ and $(\Delta I + \Delta U) > 0$.]
	{\includegraphics[width=0.3\textwidth]{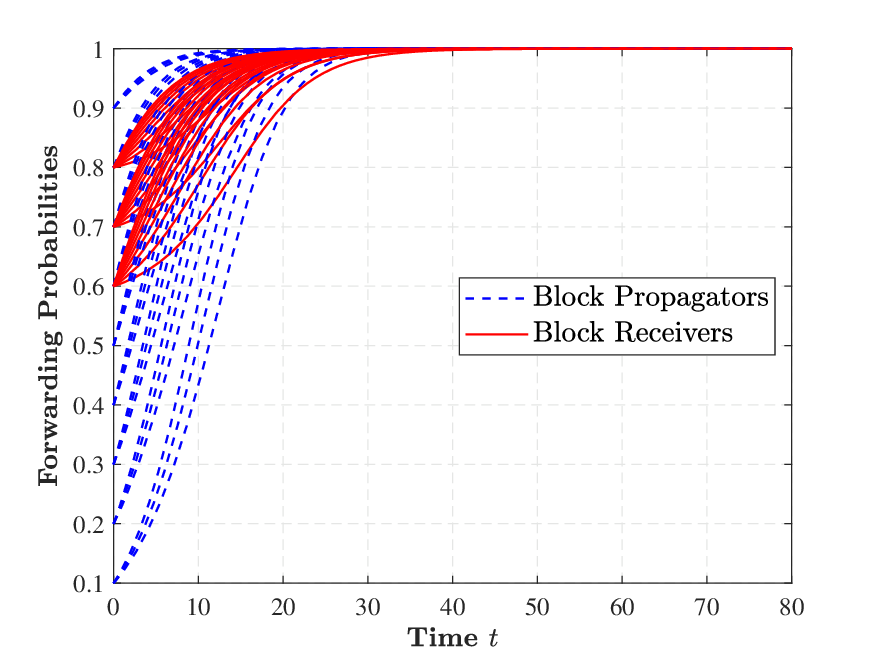}\label{11}}
        \captionsetup{font=footnotesize}
	\subfloat[ $y(t)>\frac{-\Delta U + \varepsilon R}{\Delta I + \Delta P +\varepsilon R}$ and $(\Delta I + \Delta U) < 0$.]
	{\includegraphics[width=0.3\textwidth]{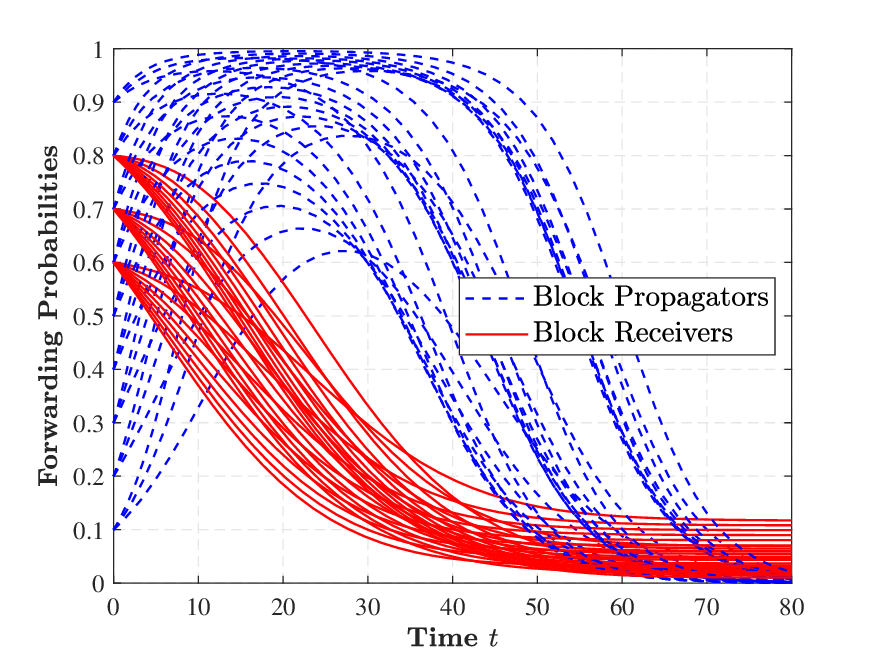}\label{10}}
        \captionsetup{font=footnotesize}
	\subfloat[ $y(t)<\frac{-\Delta U + \varepsilon R}{\Delta I + \Delta P +\varepsilon R}$ and $(\Delta I + \Delta U) < 0$.]
	{\includegraphics[width=0.3\textwidth]{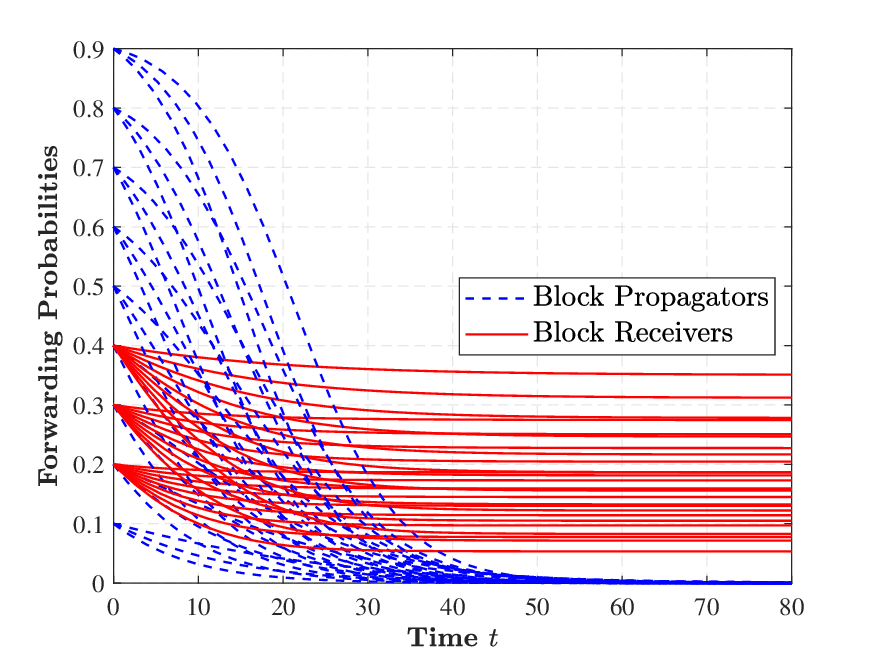}\label{00}}
        \captionsetup{font=footnotesize}
	\caption{Block propagation strategies of miners corresponding to different network conditions.}\label{strategy}
\end{figure*}

\begin{table}[t]\label{parameter}
	\renewcommand{\arraystretch}{1.1} %控制行高
        \captionsetup{font=footnotesize}
	\caption{ Key Parameters in the Simulation.}\label{table} \centering %\tabcolsep=5pt
	\begin{tabular}{m{4.7cm}<{\raggedright}|m{2.7cm}<{\centering}} %两列的长度
		\hline		
		\textbf{Parameters} & \textbf{Values}\\	
		\hline
		Total number of miners $N$ & $\small\{1000, 2000, 3000, 4000\small\}$ \\	
		\hline
		Number of adjacent miners $k$ & $\small\{2, 3, 4, 5, 6\small\}$  \\	
		\hline
		Cloud-based computing resources of each computing server $C$ & $10^{13}\:\rm{IPS}$  \\
		\hline
		Maximum number of transactions in the
        block $B_{max}$  &  $100$  \\	
		\hline		
		Packing period $T_{p}$ &  $20\:\rm{s}$ \\
		\hline		
		Mining period $T_{mine}$ &  $600\:\rm{s}$\\
		\hline
		Required number of instructions for a transaction to get validated $R_v$ & $10^6$\\
		\hline
		Packing rate of the miner that obtains the bookkeeping right $\tau$ & $[0.5, 5]$\\
		\hline
		Average size of a transaction $P_{size}$ & $300\:\rm{bit}$\\
		\hline
		Effective bit rate per unit bandwidth $R_c$ & $200\:\rm{bps}$\\
		\hline
		Total number of BSs $M$ & $100$\\
		\hline
		Average density of miners that tend to forward blocks $\overline{\omega}$ & $\small\{0.6, 0.7, 0.8, 0.9, 1\small\}$\\
		\hline
		Four probabilities $P_{e}$, $P_{r}$, $P_{f}$, $P_{i}$ & $(0, 1)$\\
		\hline
		Rewards and costs of the evolutionary game $\Delta I$, $I$, $P$, $Q$, $M$, $R$ & $[0, 1]$\\
		\hline
		The unit cost for the punishment risk $\varepsilon$ & $0.1$\\
		\hline
	\end{tabular}\label{parameter}
\end{table}

\section{Numerical Results}\label{result}
{In this section, we verify the effectiveness of the proposed incentive mechanism to provide a theoretical basis for improving block propagation efficiency and analyze factors affecting the minimum value of average AoBI and the miner proportion of different states. We first explore the impacts of the network condition on the block propagation strategies of miners. Then, we compare our proposed scheme with other block propagation mechanisms: i) \textit{Gossip protocol} that miners randomly relay transactions/blocks to their adjacent miners, which is currently used by Bitcoin and Ethereum\cite{hu2022dino}; ii) \textit{Probabilistic flooding approach}\cite{vu2019efficient}. Different from the current gossip protocol implemented by Bitcoin, the probabilistic flooding approach allows miners to maintain certain probabilities of sending information to their adjacent miners based on previous message exchanges between the miners\cite{vu2019efficient}; iii) \textit{Greedy protocol} that miners only consider their current best interests but do not consider the impact of their propagation behaviors on block propagation. Finally, we study the impacts of major factors on the minimum value of average AoBI and block propagation. Note that we use MATLAB to run the experiments on CPU intel i7-8565U and DDR4 8G RAM based on real public blockchain parameters, as shown in Table \ref{parameter} \cite{wenoptimal, rovira2019optimizing, kim2022ensuring, zhao2012sihr, liu2016shir}.}
\begin{comment}
We compare the proposed BPIM with following two other routing algorithms as follows:
\begin{itemize}
    \item \textbf{\textit{Greedy Routing:}} Block propagators only consider their current best interests but do not consider the impact of their propagation behaviors on block propagation.
    \item \textbf{\textit{Random Routing:}} Block propagators do not consider the role of incentive mechanism and randomly select adjacent miners to propagate the block.
\end{itemize}
\end{comment}

\begin{figure*}[t]
    \begin{center}
	\begin{minipage}[t]{0.43\linewidth}
		\centering
		%\captionsetup{font={normal}}
		\includegraphics[width=1\linewidth]{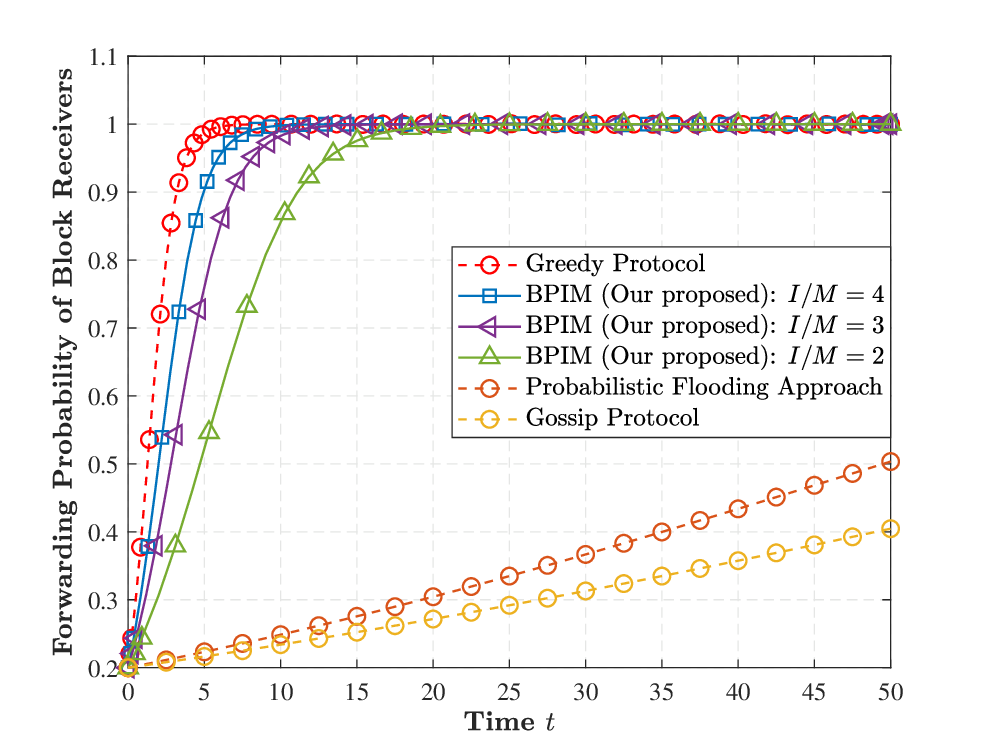}
		\captionsetup{font=footnotesize}
            \caption{The forwarding probability of block receivers for different block propagation mechanisms. Note that the initial value of the block receiver's forwarding probability is set to $0.2$.}\label{BPIM}
	\end{minipage}
	\hspace{0.3in}
	\begin{minipage}[t]{0.43\linewidth}
		\centering
		\captionsetup{font=footnotesize}
		\includegraphics[width=1\linewidth]{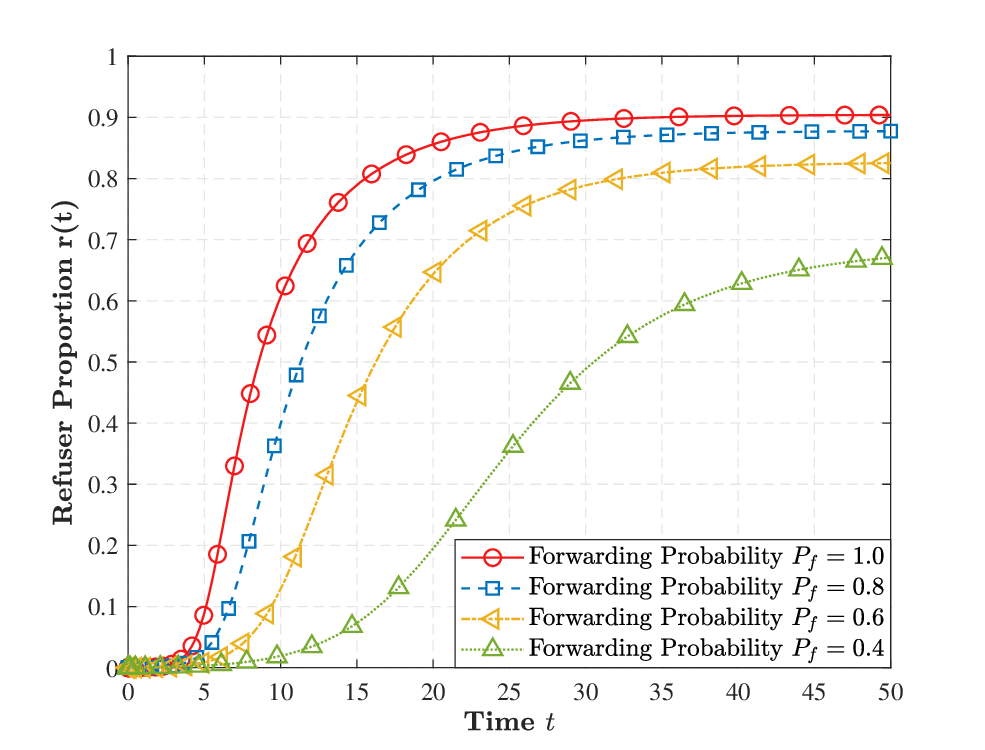}
		\caption{Density of refusers under different forwarding probabilities $P_{f}$ during block propagation, where the total number of miners $N$ is set to $4000$ and the number of adjacent miners $k$ is set to $3$.}\label{rt}
	\end{minipage}

	\begin{minipage}[t]{0.43\linewidth}
		\centering
		\captionsetup{font=footnotesize}
		\includegraphics[width=1\linewidth]{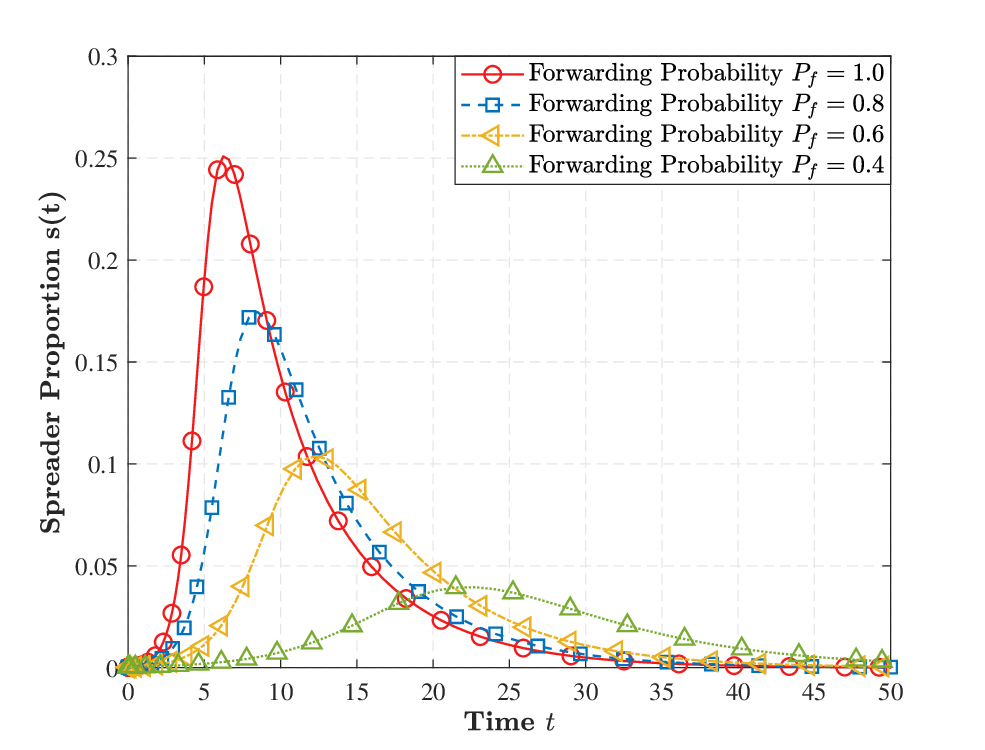}
		\caption{Density of spreaders under different forwarding probabilities $P_{f}$  during block propagation, where the total number of miners $N$ is set to $4000$ and the number of adjacent miners $k$ is set to $3$.}\label{st}
	\end{minipage}
	\hspace{0.3in}
	\begin{minipage}[t]{0.43\linewidth}
		\centering
		\captionsetup{font=footnotesize}
		\includegraphics[width=1\linewidth]{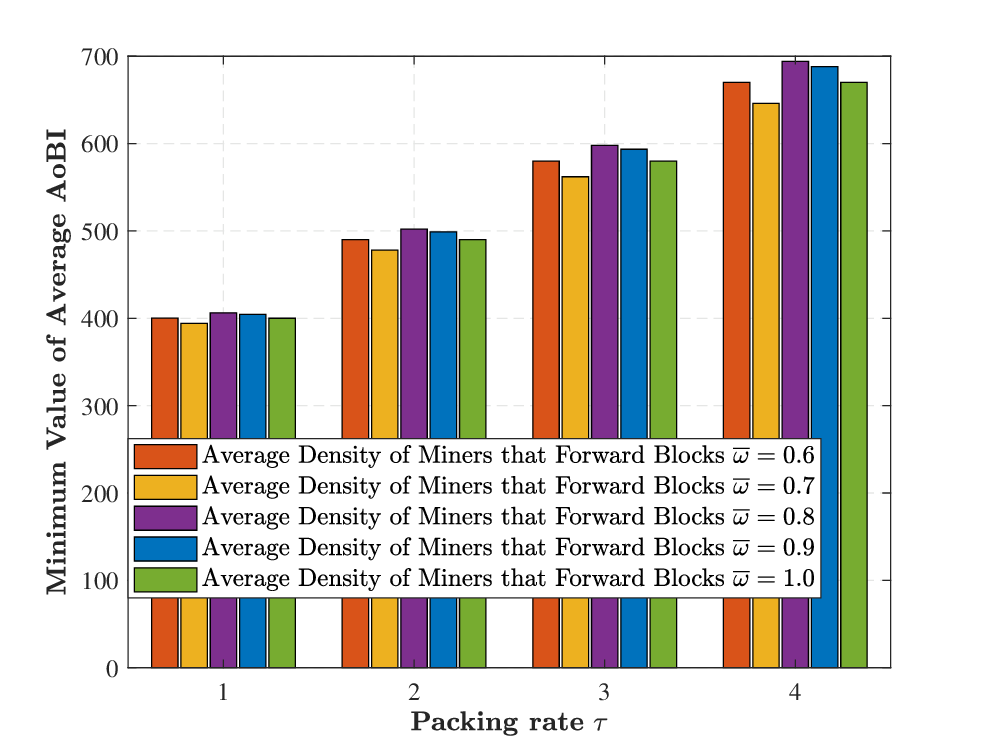}
		\caption{The minimum value of average AoBI corresponding to different packing rates $\tau$, where the total number of miners $N$ and the number of adjacent miners $k$ are set to $4000$ and $3$, respectively.}\label{AoBI_w}
	\end{minipage}
	\end{center}
\end{figure*}

\subsection{Impacts of Network Conditions on Block Propagation Strategies of Miners}
First, we analyze the impacts of the network condition on the block propagation strategies of miners, as shown in Fig. \ref{strategy}. {From Fig. \ref{11}, we can find that when $y(t)>\frac{-\Delta U + \varepsilon R}{\Delta I + \Delta P +\varepsilon R}$ and $(\Delta I + \Delta U) > 0$, the equilibrium point is $(1,1)$ and is an ESS point, indicating that as time progresses, both block propagators and block receivers forward the block, which helps reach the whole network consensus faster. From Fig. \ref{10}, we can find that when $y(t)>\frac{-\Delta U + \varepsilon R}{\Delta I + \Delta P +\varepsilon R}$ and $(\Delta I + \Delta U) < 0$, block receivers tend to not forward the block, while block propagators tend to forward the block initially and then not forward the block. The reason is that since block receivers do not forward the block, block propagators have to undertake the punishment risk. From Fig. \ref{00}, we find that both block propagators and receivers tend not to forward the block. To sum up, when $\Delta U > \varepsilon R $, namely the basic propagation reward is greater than the cost that spreaders forward the new block to evildoers, the equilibrium state of both block propagators and receivers is forwarding the block.} Therefore, the whole network consensus can be reached more quickly.

\subsection{Efficiency of the Proposed Incentive Mechanism}
\begin{figure*}[t]
    \begin{center}
        \begin{minipage}[t]{0.43\linewidth}
		\centering
		\captionsetup{font=footnotesize}
		\includegraphics[width=1\linewidth]{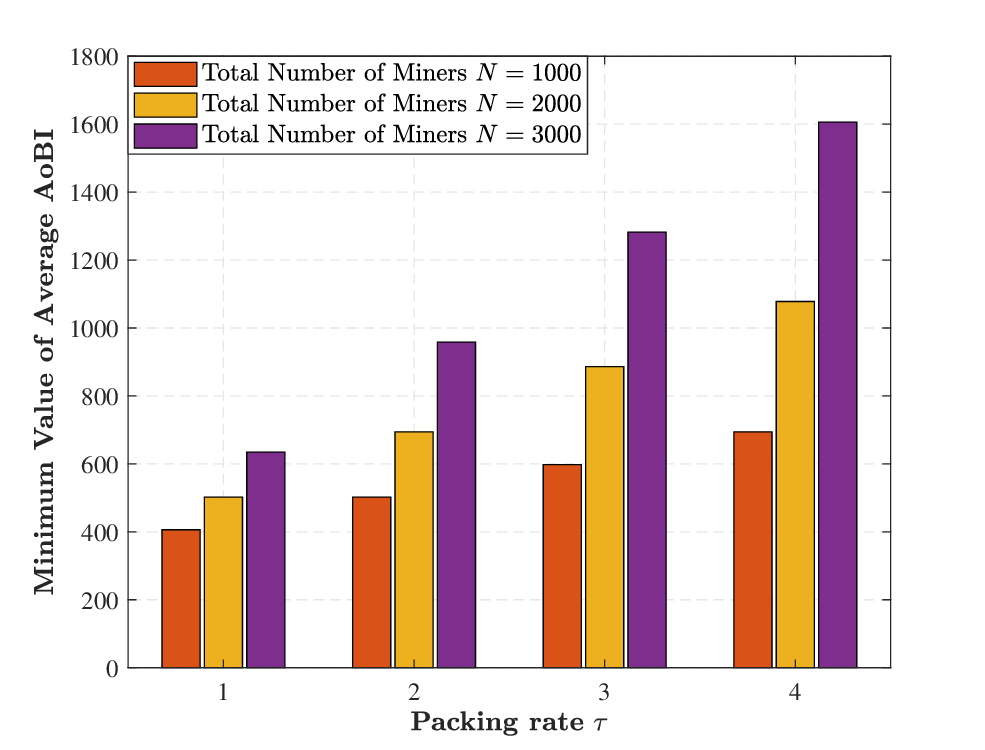}
		\caption{The minimum value of average AoBI corresponding to different packing rates $\tau$, where the number of adjacent miners $k$ and the average density of miners that forward blocks $\overline{\omega}$ are set to $3$ and $0.8$, respectively.}\label{AoBI_N}
	\end{minipage}
        \hspace{0.3in}
	\begin{minipage}[t]{0.43\linewidth}
		\centering
		\captionsetup{font=footnotesize}
		\includegraphics[width=1\linewidth]{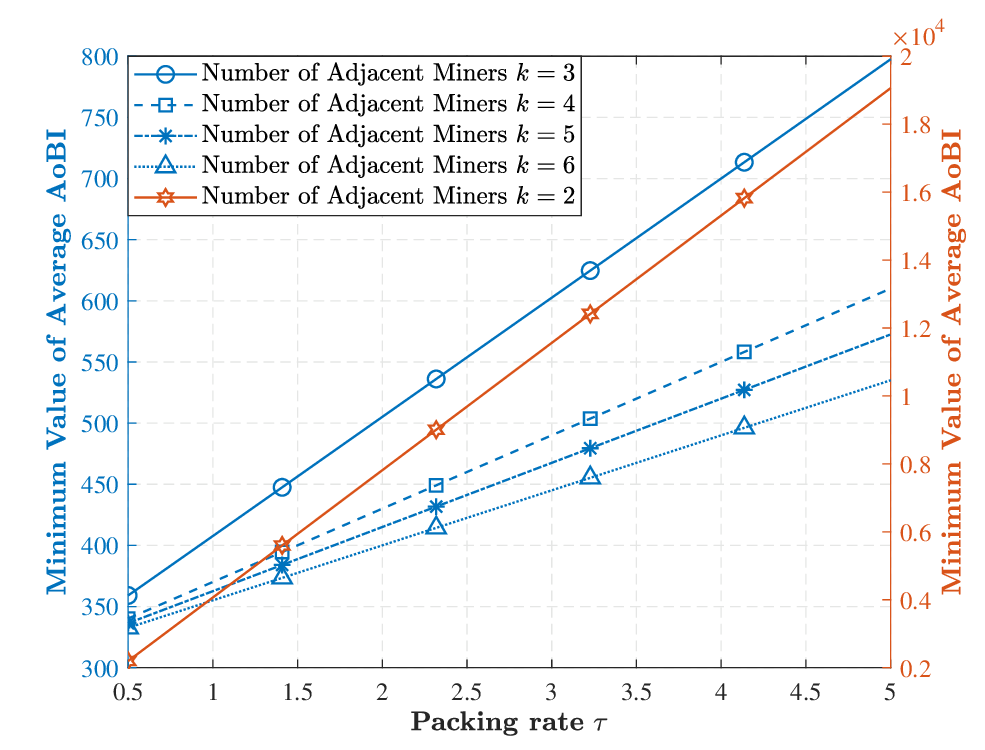}
		\caption{The minimum value of average AoBI corresponding to different packing rates $\tau$, where the total number of miners and the average density of miners that forward blocks $\overline{\omega}$ are set to $1000$ and $0.5$, respectively.}\label{AoBI_k}
	\end{minipage} 
	\hspace{0.3in}
	\begin{minipage}[t]{0.43\linewidth}
		\centering
		\captionsetup{font=footnotesize}
		\includegraphics[width=1\linewidth]{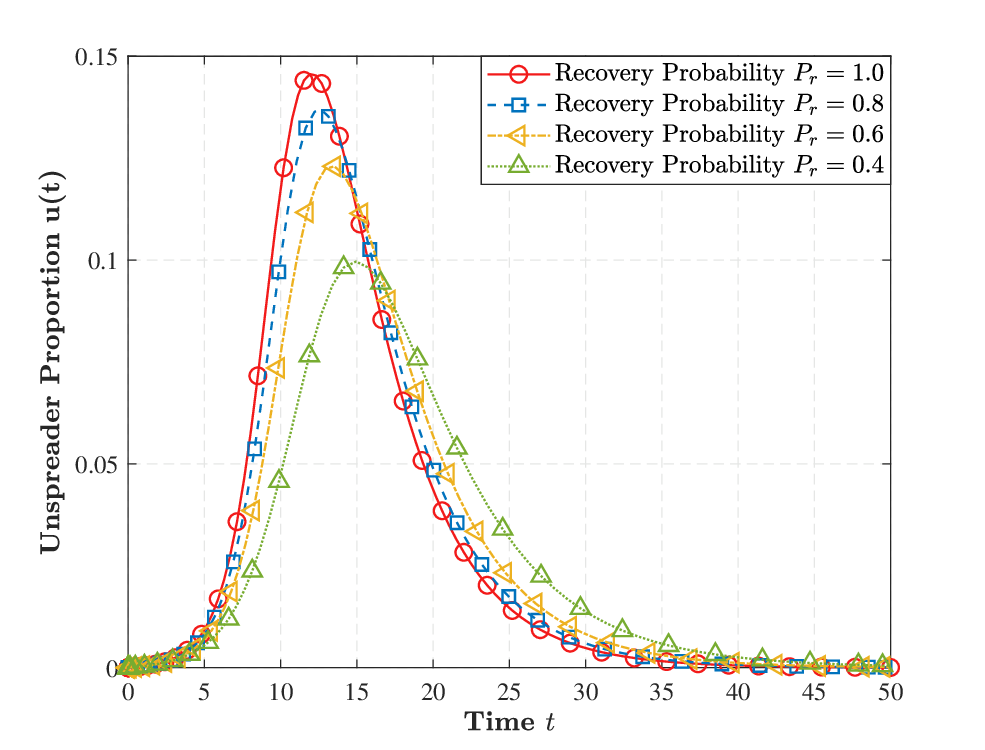}
		\caption{Density of unspreaders under different recovery probabilities $P_{r}$ during block propagation, where the forwarding probability $P_f$ is set to $0.5$.}\label{ut}
	\end{minipage}
	\hspace{0.3in}
	\begin{minipage}[t]{0.43\linewidth}
		\centering
		\captionsetup{font=footnotesize}
		\includegraphics[width=1\linewidth]{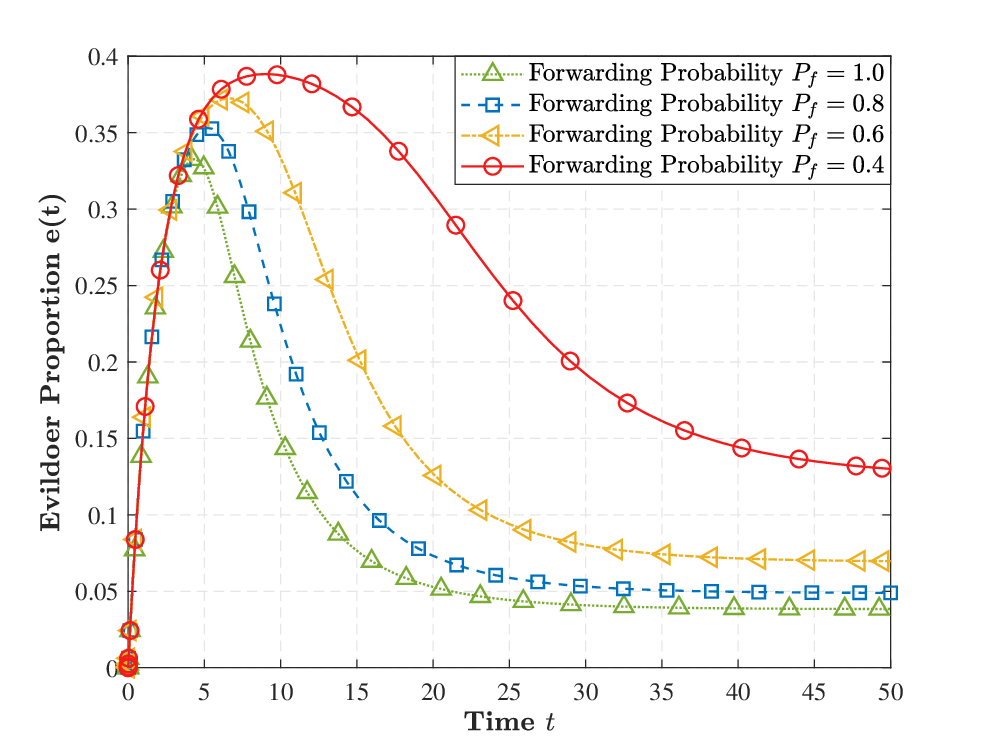}
		\caption{Density of evildoers under different forwarding probabilities $P_{f}$ during block propagation, where the total number of miners $N$ is set to $4000$ and the number of adjacent miners $k$ is set to $3$.}\label{et}
	\end{minipage}
	\end{center}
\end{figure*}
%包的大小与AoI Minimum的关系
To evaluate the performance of the proposed scheme over other block propagation mechanisms, we show the forwarding probability of block receivers under different block propagation mechanisms in Fig. \ref{BPIM}. We can observe that the forwarding probability of block receivers corresponding to each block propagation mechanism increases as time progresses, which indicates that as the interaction continues, more miners approve the new block and forward it to their adjacent miners rather than discard it. Besides the greedy protocol that performs best as expected, our proposed scheme is always ahead of the probabilistic flooding approach and the gossip protocol in the forwarding probability of block receivers. The reason is that with the role of the incentive, resource-limited miners prefer to forward blocks to suitable adjacent miners that tend to forward the block, thereby obtaining more benefits. Moreover, the greater the incentive strength, the better the performance of the proposed incentive mechanism. For example, the green line (i.e., incentive strength $I/M=2$) means that it takes about $20$ epochs for forwarding probability to reach the upper bound, while the blue line (i.e., incentive strength $I/M=4$) means that it only takes about $10$ epochs for forwarding probability to reach the upper bound.

To verify that increasing the forwarding probability can achieve the whole network consensus faster, we show the density of refusers changing over time for different forwarding probabilities $P_{f}$ in Fig. \ref{rt}. We can find that the density of refusers corresponding to different $P_{f}$ increases with time. When other probabilities are fixed, the larger the forwarding probability, the faster the number of refusers grows, indicating that the higher the forwarding probability, the more miners complete the block validation, which reaches block consensus faster. Therefore, by combining Fig. \ref{BPIM} and Fig. \ref{rt}, we can conclude that the performance of the proposed incentive mechanism is better than those of the probabilistic flooding approach and the gossip protocol. The reason is that under the role of the incentive mechanism, block propagators consider their own interests and forward blocks to appropriate adjacent miners that have not received blocks and tend to forward blocks, which is not only conducive to improving block propagation efficiency but also avoids the block retransmission problem. Moreover, miners under the greedy protocol only consider the local optimality of block propagation, which is bound to greatly increase the redundancy of the block in the miner network. \textit{In summary, the overall performance of the proposed incentive mechanism is better than those of current block propagation mechanisms}.

Figure \ref{st} shows the density of spreaders changing over time for different forwarding probabilities $P_{f}$. As we expect, when other probabilities are fixed, the larger the forwarding probability, the faster the number of spreaders grows, indicating the larger the average density of adjacent miners that tend to forward blocks $\overline{\omega}$. Figure \ref{AoBI_w} illustrates the minimum value of average AoBI corresponding to different packing rates $\tau$ under different $\overline{\omega}$. From Fig. \ref{AoBI_w}, we can find that when $B_{max}\sqrt{\frac{R_v M R_c W}{C P_{size}\overline{\omega}}}<2$, the minimum value of average AoBI increases as the packing rate $\tau$ increases. The proof can be seen in Appendix (B). In addition, for the fixed packing rate, the minimum value of average AoBI first decreases and then increases as $\overline{\omega}$ increases, which indicates that there exists an optimal $\overline{\omega}$ to minimize the average AoBI of the miner network. The reason is that with the increase in the number of adjacent miners that forward blocks, the minimum value of average AoBI decreases. However, the more the number of adjacent miners that forward blocks, the less bandwidth resources are allocated to them for miner communications, thereby increasing the minimum value of average AoBI.

\subsection{Impacts of Factors on the Minimum Value of Average AoBI and Block Propagation}
 Figure \ref{AoBI_N} illustrates the minimum value of average AoBI corresponding to different packing rates $\tau$ under different total numbers of miners $N$. As we thought, the larger the total number of miners, the larger the minimum value of average AoBI, namely the larger the average AoBI of the miner network as well. The reason is that the increase in the total number of miners means that it takes more time to complete network-wide block validation and block propagation, making the whole network consensus latency larger. Figure \ref{AoBI_k} illustrates the minimum value of average AoBI corresponding to different packing rates $\tau$ under different numbers of adjacent miners $k$. Since the value of $(\overline{\omega}k)$ exists in both cases, we present them in Fig. \ref{AoBI_k} to better analyze the effect of $k$ on the minimum value of average AoBI. The orange line corresponds to the case of $\overline{\omega}k = 1$ and the other lines correspond to the case of $\overline{\omega}k \neq 1$. From Fig. \ref{AoBI_k}, we can observe that the minimum value of average AoBI corresponding to the case of $\overline{\omega}k = 1$ is much higher than that corresponding to the case of $\overline{\omega}k \neq 1$. For the case of $\overline{\omega}k \neq 1$, the smaller $k$, the larger the minimum value of average AoBI. The reason is that a smaller $k$ means that fewer adjacent miners forward the block, which reduces block propagation efficiency and makes the overall block propagation latency of the miner network exponentially larger, thus increasing the average AoBI of the network.

Figure \ref{ut} shows the density of unspreaders changing over time for different recovery probabilities $P_{r}$. From Fig. \ref{ut}, we can obverse that when other probabilities are fixed, the larger the recovery probability, the greater the growth rate and decline rate of unspreaders, and the number of unspreaders is larger. The reason is that more ignorants participate in validating the block as recovery probability increases, leading to an increase in the number of unspreaders. After completing the block validation, unspreaders are converted into refusers with the probability $P_i$. Therefore, the larger the number of unspreaders, the faster the decline of unspreaders. Figure \ref{et} shows the density of evildoers changing over time for different forwarding probabilities $P_{f}$. From Fig. \ref{et}, we can see that the larger the forwarding probability, the smaller the number of evildoers, and the greater the decline rate of evildoers. {The reason is that the higher forwarding probability indicates that miners that forward blocks can obtain more benefits than those undertaking punishment risk by doing evil. Therefore, the proposed incentive mechanism can not only improve block propagation efficiency but also decrease the likelihood of evil behavior by miners.}

\section{Conclusion and Future Work}\label{conclude}
{In this paper, we focused on improving the performance of blockchain-enabled Web 3.0, especially optimizing block propagation. Specifically, we proposed a novel freshness metric called AoBI based on the concept of AoI for public blockchains to measure block freshness. To make block propagation optimization tractable, we classified miners into five different states and proposed a block propagation model for public blockchains inspired by epidemic models. To achieve block propagation optimization, we then established an incentive mechanism based on the evolutionary game for improving block propagation efficiency. Finally, numerical results demonstrate that compared with other block propagation mechanisms, the proposed incentive mechanism can achieve block propagation optimization and decrease the minimum value of average AoBI, in which the greater the incentive strength, the higher block propagation efficiency.}

{For future work, considering the impact of the energy consumption of miners on block propagation delay, we will further optimize AoBI to better measure block freshness. Besides, we will systematically explore the potential impact of the proposed incentive mechanism on the decentralization and security of blockchain-enabled Web 3.0. Moreover, we can design a prototype system to evaluate our scheme and use AI tools such as DRL to solve the evolutionary game, which can objectively reflect the interaction process between miners for block propagation.}

\section*{Appendix A}\label{appendix_A}
We consider it a normal case that miners that have forwarded the new block will never forward it even if receiving it again. Considering that reaching the whole network consensus needs $(m+1)$ rounds of validating and propagating the block, we can obtain
\begin{equation}\label{wk}
    \begin{aligned}
        k + \overline{\omega}k^2 + \overline{\omega}^2k^3+ \cdots + \overline{\omega}^mk^{m+1} \geq N,\: \overline{\omega}k \geq 1.
    \end{aligned}
\end{equation}
\begin{itemize}
    \item When $\overline{\omega}k = 1$, we can rewrite
    (\ref{wk}) as
    \begin{equation}\label{k_N}
        \begin{aligned}
            \underbrace{k + k + k + \cdots + k}_{m+1} \geq N.\\
        \end{aligned}
    \end{equation}
    Based on (\ref{k_N}), the value range of $(m+1)$ is given by
    \begin{equation}
        \begin{aligned}
            m+1\geq \frac{N}{k}.
        \end{aligned}
    \end{equation}
    Due to $(m+1) \in \mathbb{Z}^+$, we can obtain
    \begin{equation}
        \begin{aligned}
            m + 1 = \bigg\lceil \frac{N}{k} \bigg\rceil.
        \end{aligned}
    \end{equation}
    \item When $\overline{\omega}k > 1$, we can rewrite (\ref{wk}) as
    \begin{equation}
        \begin{aligned}
            \frac{k[1-(\overline{\omega}k)^{m+1}]}{1-\overline{\omega}k} \geq N,
        \end{aligned}
    \end{equation}
    that is 
        \begin{equation}\label{wk^m}
            \begin{aligned}
              (\overline{\omega}k)^{m+1} \geq \frac{N(\overline{\omega}k-1)+k}{k} > 0.
            \end{aligned}
        \end{equation}
        Taking the logarithm with base $(\overline{\omega}k)$ on both sides of (\ref{wk^m}) simultaneously, we can obtain
        \begin{equation}
            \begin{aligned}
              m+1 \geq \log_{\overline{\omega}k}\bigg(\frac{N(\overline{\omega}k-1)+k}{k}\bigg).
            \end{aligned}
        \end{equation}
        Due to $(m+1) \in \mathbb{Z}^+$, we can obtain
        \begin{equation}
            \begin{aligned}
              m+1 = \Bigg\lceil \log_{\overline{\omega}k}\bigg(\frac{N(\overline{\omega}k-1)+k}{k}\bigg) \Bigg\rceil.
            \end{aligned}
        \end{equation}
\end{itemize}

In summary, the number of rounds for reaching the whole network consensus is given by
\begin{equation}
    m + 1 = \left\{
    \begin{aligned}
      &\: \Bigg\lceil \log_{\overline{\omega}k}\bigg(\frac{N(\overline{\omega}k-1)+k}{k}\bigg) \Bigg\rceil, \quad \overline{\omega}k > 1,\\
      &\: \bigg\lceil \frac{N}{k} \bigg\rceil, \quad \overline{\omega}k = 1.
    \end{aligned}
    \right.
\end{equation}
Therefore, \textbf{Proposition \ref{P_1}} and \textbf{Proposition \ref{P_2}} are proved.

\section*{Appendix B}\label{appendix_B}
As shown above, the minimum value of average AoBI is essentially a function $\varphi(\tau) = a\tau + \frac{b}{\tau}, \tau \in [\frac{1}{T_p}, \frac{B_{max}}{T_p}]$, where $a = \frac{P_{size}T_p\overline{\omega}N}{M R_c W}, b = \frac{R_v N B_{max}^2}{4CT_p}, a, b > 0$. Taking the derivative of $\varphi(\tau)$, we have
\begin{equation}
    \varphi^\prime(\tau) = \frac{a\tau^2-b}{\tau^2}.
\end{equation}
To prove that the minimum value of average AoBI increases monotonically with the increase of $\tau$ on $[\frac{1}{T_p}, \frac{B_{max}}{T_p}]$, $\varphi^\prime(\tau)$ should be a constant that is greater than $0$ on $[\frac{1}{T_p}, \frac{B_{max}}{T_p}]$. Thus, we have $\frac{1}{T_p} > \sqrt{\frac{b}{a}}$, namely
\begin{equation}
    \begin{aligned}
      \frac{1}{T_p} > \frac{B_{max}}{2T_p}\sqrt{\frac{R_v M R_c W}{C P_{size} \overline{\omega}}},
    \end{aligned}
\end{equation}
that is 
\begin{equation}
    \begin{aligned}
      0< B_{max}\sqrt{\frac{R_v M R_c W}{C P_{size}\overline{\omega}}} < 2.
    \end{aligned}
\end{equation}
Therefore, the proof is completed.

\bibliographystyle{IEEEtran}
\bibliography{ref}

% Generated by IEEEtran.bst, version: 1.14 (2015/08/26)
\begin{thebibliography}{10}
\providecommand{\url}[1]{#1}
\csname url@samestyle\endcsname
\providecommand{\newblock}{\relax}
\providecommand{\bibinfo}[2]{#2}
\providecommand{\BIBentrySTDinterwordspacing}{\spaceskip=0pt\relax}
\providecommand{\BIBentryALTinterwordstretchfactor}{4}
\providecommand{\BIBentryALTinterwordspacing}{\spaceskip=\fontdimen2\font plus
\BIBentryALTinterwordstretchfactor\fontdimen3\font minus
  \fontdimen4\font\relax}
\providecommand{\BIBforeignlanguage}[2]{{%
\expandafter\ifx\csname l@#1\endcsname\relax
\typeout{** WARNING: IEEEtran.bst: No hyphenation pattern has been}%
\typeout{** loaded for the language `#1'. Using the pattern for}%
\typeout{** the default language instead.}%
\else
\language=\csname l@#1\endcsname
\fi
#2}}
\providecommand{\BIBdecl}{\relax}
\BIBdecl

\bibitem{chen2022digital}
J.~Li, R.~Qin, S.~Guan, J.~Hou, and F.-Y. Wang, ``Blockchain intelligence:
  Intelligent blockchains for {Web 3.0} and beyond,'' \emph{IEEE Transactions
  on Systems, Man, and Cybernetics: Systems}, pp. 1--10, 2024.

\bibitem{10542397}
J.~Liao, J.~Wen, J.~Kang, C.~Yi, Y.~Zhang, Y.~Jiao, D.~Niyato, D.~I. Kim, and
  S.~Xie, ``Graph attention network-based block propagation with optimal {AoB}
  and reputation in {Web} 3.0,'' \emph{IEEE Transactions on Cognitive
  Communications and Networking}, pp. 1--1, 2024.

\bibitem{lin2023unified}
Y.~Lin, Z.~Gao, H.~Du, D.~Niyato, J.~Kang, R.~Deng, and X.~S. Shen, ``A unified
  blockchain-semantic framework for wireless edge intelligence enabled {Web}
  3.0,'' \emph{IEEE Wireless Communications}, 2023.

\bibitem{10528325}
L.~Shi, T.~Wang, Z.~Xiong, Z.~Wang, Y.~Liu, and J.~Li, ``Blockchain-aided
  decentralized trust management of edge computing: Towards reliable off-chain
  and on-chain trust,'' \emph{IEEE Network}, pp. 1--1, 2024.

\bibitem{wenoptimal}
J.~Wen, X.~Liu, Z.~Xiong, M.~Shen, S.~Wang, Y.~Jiao, J.~Kang, and H.~Li,
  ``Optimal block propagation and incentive mechanism for blockchain networks
  in {6G},'' in \emph{2022 IEEE International Conference on Trust, Security and
  Privacy in Computing and Communications (TrustCom)}.\hskip 1em plus 0.5em
  minus 0.4em\relax IEEE, 2022, pp. 369--374.

\bibitem{xu2023quantum}
M.~Xu, X.~Ren, D.~Niyato, J.~Kang, C.~Qiu, Z.~Xiong, X.~Wang, and V.~C. Leung,
  ``When quantum information technologies meet blockchain in {Web} 3.0,''
  \emph{IEEE Network}, 2023.

\bibitem{shen2022secure}
M.~Shen, H.~Lu, F.~Wang, H.~Liu, and L.~Zhu, ``Secure and efficient
  blockchain-assisted authentication for edge-integrated
  {Internet-of-Vehicles},'' \emph{IEEE Transactions on Vehicular Technology},
  2022.

\bibitem{kang2022blockchain}
J.~Kang, J.~Wen, D.~Ye, B.~Lai, T.~Wu, Z.~Xiong, J.~Nie, D.~Niyato, Y.~Zhang,
  and S.~Xie, ``Blockchain-empowered federated learning for healthcare
  metaverses: User-centric incentive mechanism with optimal data freshness,''
  \emph{IEEE Transactions on Cognitive Communications and Networking}, vol.~10,
  no.~1, pp. 348--362, 2024.

\bibitem{nakamoto2008bitcoin}
S.~Nakamoto, ``Bitcoin: A peer-to-peer electronic cash system,''
  \emph{Decentralized Business Review}, p. 21260, 2008.

\bibitem{9357330}
S.~R. Pokhrel, ``Blockchain brings trust to collaborative drones and {LEO}
  satellites: An intelligent decentralized learning in the space,'' \emph{IEEE
  Sensors Journal}, vol.~21, no.~22, pp. 25\,331--25\,339, 2021.

\bibitem{zhang2022speeding}
L.~Zhang, T.~Wang, and S.~C. Liew, ``Speeding up block propagation in bitcoin
  network: Uncoded and coded designs,'' \emph{Computer Networks}, vol. 206, p.
  108791, 2022.

\bibitem{9917546}
X.~Wang, X.~Jiang, Y.~Liu, J.~Wang, and Y.~Sun, ``Data propagation for low
  latency blockchain systems,'' \emph{IEEE Journal on Selected Areas in
  Communications}, pp. 1--1, 2022.

\bibitem{sallal2022security}
M.~Sallal, G.~Owenson, D.~Salman, and M.~Adda, ``Security and performance
  evaluation of master node protocol based reputation blockchain in the bitcoin
  network,'' \emph{Blockchain: Research and Applications}, vol.~3, no.~1, p.
  100048, 2022.

\bibitem{li2021enhancing}
M.~Li, Y.~Qin, B.~Liu, and X.~Chu, ``Enhancing the efficiency and scalability
  of blockchain through probabilistic verification and clustering,''
  \emph{Information Processing \& Management}, vol.~58, no.~5, p. 102650, 2021.

\bibitem{ersoy2018transaction}
O.~Ersoy, Z.~Ren, Z.~Erkin, and R.~L. Lagendijk, ``Transaction propagation on
  permissionless blockchains: incentive and routing mechanisms,'' in \emph{2018
  crypto valley conference on blockchain technology (cvcbt)}.\hskip 1em plus
  0.5em minus 0.4em\relax IEEE, 2018, pp. 20--30.

\bibitem{ying2022aoti}
C.~Ying, Z.~Zhao, C.~Yi, Y.~Shi, and R.~Wang, ``{AoTI} minimization for
  multi-type data sampling in industrial wireless sensor networks,'' in
  \emph{2022 IEEE 20th International Conference on Embedded and Ubiquitous
  Computing (EUC)}.\hskip 1em plus 0.5em minus 0.4em\relax IEEE, 2022, pp.
  36--41.

\bibitem{li2021age}
R.~Li, Q.~Ma, J.~Gong, Z.~Zhou, and X.~Chen, ``Age of processing: Age-driven
  status sampling and processing offloading for edge-computing-enabled
  real-time {IoT} applications,'' \emph{IEEE Internet of Things Journal},
  vol.~8, no.~19, pp. 14\,471--14\,484, 2021.

\bibitem{8902529}
X.~Song, X.~Qin, Y.~Tao, B.~Liu, and P.~Zhang, ``Age based task scheduling and
  computation offloading in mobile-edge computing systems,'' in \emph{2019 IEEE
  Wireless Communications and Networking Conference Workshop (WCNCW)}, 2019,
  pp. 1--6.

\bibitem{kan2018boost}
J.~Kan, L.~Zou, B.~Liu, and X.~Huang, ``Boost blockchain broadcast propagation
  with tree routing,'' in \emph{International Conference on Smart
  Blockchain}.\hskip 1em plus 0.5em minus 0.4em\relax Springer, 2018, pp.
  77--85.

\bibitem{jiang2023approaching}
S.~Jiang and J.~Wu, ``Approaching an optimal bitcoin mining overlay,''
  \emph{IEEE/ACM Transactions on Networking}, 2023.

\bibitem{liu2021make}
Z.~Liu, Y.~Xiang, J.~Shi, P.~Gao, H.~Wang, X.~Xiao, B.~Wen, Q.~Li, and Y.-C.
  Hu, ``Make {Web} 3. 0 connected,'' \emph{IEEE transactions on dependable and
  secure computing}, vol.~19, no.~5, pp. 2965--2981, 2021.

\bibitem{ragnedda2019blockchain}
M.~Ragnedda and G.~Destefanis, \emph{Blockchain and {Web} 3.0}.\hskip 1em plus
  0.5em minus 0.4em\relax London: Routledge, Taylor and Francis Group, 2019.

\bibitem{yates2021age}
R.~D. Yates, Y.~Sun, D.~R. Brown, S.~K. Kaul, E.~Modiano, and S.~Ulukus, ``Age
  of information: An introduction and survey,'' \emph{IEEE Journal on Selected
  Areas in Communications}, vol.~39, no.~5, pp. 1183--1210, 2021.

\bibitem{Jinbo}
J.~Wen, J.~Kang, Z.~Xiong, Y.~Zhang, H.~Du, Y.~Jiao, and D.~Niyato, ``Task
  freshness-aware incentive mechanism for vehicle twin migration in vehicular
  metaverses,'' in \emph{2023 IEEE International Conference on Metaverse
  Computing, Networking and Applications (MetaCom)}.\hskip 1em plus 0.5em minus
  0.4em\relax IEEE, 2023, pp. 481--487.

\bibitem{wen2023freshness}
J.~Wen, J.~Kang, M.~Xu, H.~Du, Z.~Xiong, Y.~Zhang, and D.~Niyato,
  ``Freshness-aware incentive mechanism for mobile {AI}-generated content
  ({AIGC}) networks,'' in \emph{2023 IEEE/CIC International Conference on
  Communications in China (ICCC)}.\hskip 1em plus 0.5em minus 0.4em\relax IEEE,
  2023, pp. 1--6.

\bibitem{jiao2019auction}
Y.~Jiao, P.~Wang, D.~Niyato, and K.~Suankaewmanee, ``Auction mechanisms in
  cloud/fog computing resource allocation for public blockchain networks,''
  \emph{IEEE Transactions on Parallel and Distributed Systems}, vol.~30, no.~9,
  pp. 1975--1989, 2019.

\bibitem{wang2023connectivity}
W.~Wang, J.~Chen, Y.~Jiao, J.~Kang, W.~Dai, and Y.~Xu, ``Connectivity-aware
  contract for incentivizing {IoT} devices in complex wireless blockchain,''
  \emph{IEEE Internet of Things Journal}, 2023.

\bibitem{li2021contract}
J.~Li, T.~Liu, D.~Niyato, P.~Wang, J.~Li, and Z.~Han, ``Contract-theoretic
  pricing for security deposits in sharded blockchain with {Internet of Things
  (IoT)},'' \emph{IEEE Internet of Things Journal}, vol.~8, no.~12, pp.
  10\,052--10\,070, 2021.

\bibitem{decker2013information}
C.~Decker and R.~Wattenhofer, ``Information propagation in the bitcoin
  network,'' in \emph{IEEE P2P 2013 Proceedings}.\hskip 1em plus 0.5em minus
  0.4em\relax IEEE, 2013, pp. 1--10.

\bibitem{rovira2019optimizing}
A.~Rovira-Sugranes and A.~Razi, ``Optimizing the age of information for
  blockchain technology with applications to {IoT} sensors,'' \emph{IEEE
  Communications Letters}, vol.~24, no.~1, pp. 183--187, 2019.

\bibitem{li2018transaction}
J.~Li, Y.~Yuan, S.~Wang, and F.-Y. Wang, ``Transaction queuing game in bitcoin
  blockchain,'' in \emph{2018 IEEE Intelligent Vehicles Symposium (IV)}.\hskip
  1em plus 0.5em minus 0.4em\relax IEEE, 2018, pp. 114--119.

\bibitem{shi2022pooling}
L.~Shi, T.~Wang, J.~Li, S.~Zhang, and S.~Guo, ``Pooling is not favorable:
  Decentralize mining power of {PoW} blockchain using age-of-work,'' \emph{IEEE
  Transactions on Cloud Computing}, vol.~11, no.~3, pp. 2756--2769, 2022.

\bibitem{qiu2019cloud}
C.~Qiu, H.~Yao, C.~Jiang, S.~Guo, and F.~Xu, ``Cloud computing assisted
  blockchain-enabled {Internet of Things},'' \emph{IEEE Transactions on Cloud
  Computing}, 2019.

\bibitem{kim2022ensuring}
M.~Kim, S.~Lee, C.~Park, J.~Lee, and W.~Saad, ``Ensuring data freshness for
  blockchain-enabled monitoring networks,'' \emph{IEEE Internet of Things
  Journal}, vol.~9, no.~12, pp. 9775--9788, 2022.

\bibitem{7021909}
R.~R. Tyagi, F.~Aurzada, K.-D. Lee, and M.~Reisslein, ``Connection
  establishment in {LTE-A} networks: Justification of poisson process
  modeling,'' \emph{IEEE Systems Journal}, vol.~11, no.~4, pp. 2383--2394,
  2017.

\bibitem{wang2022block}
Y.~Wang, X.~Zhang, X.~Wang, Q.~Cui, and X.~Tao, ``Block propagation in
  blockchain-based {6G} network: From a communication perspective,'' in
  \emph{2022 14th International Conference on Wireless Communications and
  Signal Processing (WCSP)}.\hskip 1em plus 0.5em minus 0.4em\relax IEEE, 2022,
  pp. 6--11.

\bibitem{zhao2012sihr}
L.~Zhao, J.~Wang, Y.~Chen, Q.~Wang, J.~Cheng, and H.~Cui, ``{SIHR} rumor
  spreading model in social networks,'' \emph{Physica A: Statistical Mechanics
  and its Applications}, vol. 391, no.~7, pp. 2444--2453, 2012.

\bibitem{liu2016shir}
Y.~Liu, S.-M. Diao, Y.-X. Zhu, and Q.~Liu, ``{SHIR} competitive information
  diffusion model for online social media,'' \emph{Physica A: Statistical
  Mechanics and its Applications}, vol. 461, pp. 543--553, 2016.

\bibitem{ferdous2021survey}
M.~S. Ferdous, M.~J.~M. Chowdhury, and M.~A. Hoque, ``A survey of consensus
  algorithms in public blockchain systems for crypto-currencies,''
  \emph{Journal of Network and Computer Applications}, vol. 182, p. 103035,
  2021.

\bibitem{vu2019efficient}
H.~Vu and H.~Tewari, ``An efficient peer-to-peer bitcoin protocol with
  probabilistic flooding,'' in \emph{International Conference for Emerging
  Technologies in Computing}.\hskip 1em plus 0.5em minus 0.4em\relax Springer,
  2019, pp. 29--45.

\bibitem{9454291}
Y.~Li, B.~Cao, L.~Liang, D.~Mao, and L.~Zhang, ``Block access control in
  wireless blockchain network: Design, modeling and analysis,'' \emph{IEEE
  Transactions on Vehicular Technology}, vol.~70, no.~9, pp. 9258--9272, 2021.

\bibitem{weibull1997evolutionary}
J.~W. Weibull, \emph{Evolutionary game theory}.\hskip 1em plus 0.5em minus
  0.4em\relax MIT press, 1997.

\bibitem{4607241}
D.~Niyato and E.~Hossain, ``Dynamics of network selection in heterogeneous
  wireless networks: An evolutionary game approach,'' \emph{IEEE Transactions
  on Vehicular Technology}, vol.~58, no.~4, pp. 2008--2017, 2009.

\bibitem{liu2018evolutionary}
X.~Liu, W.~Wang, D.~Niyato, N.~Zhao, and P.~Wang, ``Evolutionary game for
  mining pool selection in blockchain networks,'' \emph{IEEE Wireless
  Communications Letters}, vol.~7, no.~5, pp. 760--763, 2018.

\bibitem{senan2007brief}
N.~A.~F. Senan, ``A brief introduction to using {ODE45} in {MATLAB},''
  \emph{University of California at Berkeley, USA}, 2007.

\bibitem{hu2022dino}
Z.~Hu and Z.~Xiao, ``Dino: A block transmission protocol with low bandwidth
  consumption and propagation latency,'' in \emph{IEEE INFOCOM 2022-IEEE
  Conference on Computer Communications}.\hskip 1em plus 0.5em minus
  0.4em\relax IEEE, 2022, pp. 1319--1328.

\end{thebibliography}
\end{document}